\newcommand{\abs}[1]{\ensuremath{\left| #1 \right|}}
\NewDocumentCommand\col{g}{%
  \IfNoValueTF{#1}{\ensuremath{\mathrm{vec}}}{\ensuremath{\mathrm{vec}}\of{#1}}%
}
\NewDocumentCommand\of{og}{%
  \IfNoValueTF{#1}%
    { \IfNoValueTF{#2}{}{\!\({#2}\)} }%
    { \IfNoValueTF{#2}{\!\[{#1}\]}{\!\{{#2}\}} }%
}
\DeclareMathOperator{\Diff}{\ipaclap{D}{\raisebox{.204em}{\textpalhook}\kern.44em}\kern-.1em}
\NewDocumentCommand\diff{g}{%
  \IfNoValueTF{#1}
  {\text{\texthtd}}
  {\text{\texthtd}\of{#1}}%
}
\RenewDocumentCommand\ln{g}{%
  \IfNoValueTF{#1}{\mathrm{ln\ }}{\mathrm{ln}\of{#1}}%
}
\NewDocumentCommand\Real{og}{%
  \IfNoValueTF{#1}%
    { \IfNoValueTF{#2}{\mathcal{R}\!\!\mathpzc{e}}{\mathcal{R}\!\!\mathpzc{e}\!\{{#2}\}} }%
    { \IfNoValueTF{#2}{\mathcal{R}\!\!\mathpzc{e}\!\[{#1}\]}{\mathcal{R}\!\!\mathpzc{e}\!\({#2}\)} }%
}
\NewDocumentCommand\Imag{og}{%
  \IfNoValueTF{#1}%
    { \IfNoValueTF{#2}{\mathcal{I}\!\!\mathpzc{m}}{\mathcal{I}\!\!\mathpzc{m}\!\{{#2}\}} }%
    { \IfNoValueTF{#2}{\mathcal{I}\!\!\mathpzc{m}\!\[{#1}\]}{\mathcal{I}\!\!\mathpzc{m}\!\({#2}\)} }%
}
\RenewDocumentCommand\cos{g}{%
  \IfNoValueTF{#1}{\mathrm{cos}}{\mathrm{cos}\of{#1}}%
}
\RenewDocumentCommand\sin{g}{%
  \IfNoValueTF{#1}{\mathrm{sin}}{\mathrm{sin}\of{#1}}%
}
\RenewDocumentCommand\tan{g}{%
  \IfNoValueTF{#1}{\mathrm{tan}}{\mathrm{tan}\of{#1}}%
}
\RenewDocumentCommand\arccos{g}{%
  \IfNoValueTF{#1}{\mathrm{arccos}}{\mathrm{arccos}\of{#1}}%
}
\RenewDocumentCommand\arcsin{g}{%
  \IfNoValueTF{#1}{\mathrm{arcsin}}{\mathrm{arcsin}\of{#1}}%
}
\RenewDocumentCommand\arctan{g}{%
  \IfNoValueTF{#1}{\mathrm{arctan}}{\mathrm{arctan}\of{#1}}%
}
\RenewDocumentCommand\cot{g}{%
  \IfNoValueTF{#1}{\mathrm{cot}}{\mathrm{cot}\of{#1}}%
}
\newcommand{\floor}[1]{\ensuremath{\left\lfloor #1 \right\rfloor}}
\NewDocumentCommand\tr{g}{%
  \IfNoValueTF{#1}{\mathrm{tr}}{\mathrm{tr}\of{#1}}%
}
\NewDocumentCommand\diag{og}{%
  \IfNoValueTF{#1}%
    { \IfNoValueTF{#2}{\ensuremath{\mathrm{diag}}}{\ensuremath{\mathrm{diag}\of{#2}}} }%
    { \IfNoValueTF{#2}{\ensuremath{\mathrm{diag}\of[#1]}}{\ensuremath{\mathrm{diag}\of[]{#2}}} }%
}
\RenewDocumentCommand\exp{g}{%
  \IfNoValueTF{#1}{\ensuremath{\mathrm{exp}}}{\ensuremath{\mathrm{exp}}\of{#1}}%
}
\newcommand{\E}[2][]{\Operator[#1]{E}{#2}}
\NewDocumentCommand\C{g}{%
  \IfNoValueTF{#1}{\mathrm{Cov}}{\mathrm{Cov}\of{#1}}%
}
\renewcommand{\(}{\ensuremath{\left(}}
\renewcommand{\)}{\ensuremath{\right)}}
\renewcommand{\[}{\ensuremath{\left[}}
\renewcommand{\]}{\ensuremath{\right]}}
\let\oldBracketLeft\{
\let\oldBracketRight\}
\renewcommand{\{}{\ensuremath{\left\oldBracketLeft}}
\renewcommand{\}}{\ensuremath{\right\oldBracketRight}}
\NewDocumentCommand\F{og}{%
  \IfNoValueTF{#1}%
    { \IfNoValueTF{#2}{\mathcal{F}}{\mathcal{F}\!\{{#2}\}} }%
    { \IfNoValueTF{#2}{\mathcal{F}\!\[{#1}\]}{\mathcal{F}\!\({#2}\)} }%
}
\NewDocumentCommand\FInv{og}{%
  \IfNoValueTF{#1}%
    { \IfNoValueTF{#2}{\mathcal{F}^{-1}}{\mathcal{F}^{-1}\!\{{#2}\}} }%
    { \IfNoValueTF{#2}{\mathcal{F}^{-1}\!\[{#1}\]}{\mathcal{F}^{-1}\!\({#2}\)} }%
}
\NewDocumentCommand\rect{g}{%
  \IfNoValueTF{#1}
  {\ensuremath{\mathrm{rect}}}
  {\ensuremath{\mathrm{rect}\of{#1}}}%
}
\NewDocumentCommand\sinc{g}{%
  \IfNoValueTF{#1}
  {\ensuremath{\mathrm{sinc}}}
  {\ensuremath{\mathrm{sinc}\of{#1}}}%
}
\NewDocumentCommand\supp{g}{%
  \IfNoValueTF{#1}
  {\ensuremath{\mathrm{supp}}}
  {\ensuremath{\mathrm{supp}\of{#1}}}%
}
\let\oldMathcal\mathcal
\renewcommand{\mathcal}[1]{\ensuremath{\oldMathcal{#1}}}
\def\foreach#1#2#3{%
  \@test@foreach{#1}{#2}#3,\@end@token
}
\def\@swallow#1{}
\def\@test@foreach#1#2{%
  \@ifnextchar\@end@token%
    {\@swallow}%
    {\@foreach{#1}{#2}}%
}
\def\@foreach#1#2#3,#4\@end@token{%
  #1{#2}{#3}%
  \@test@foreach{#1}{#2}#4\@end@token%
}
\newtheorem{theorem}{Theorem}[section]
\newtheorem{lemma}[theorem]{Lemma}
\newenvironment{definition}[1][Definition]{\begin{trivlist}
\item[\hskip \labelsep {\bfseries #1}]}{\end{trivlist}}
\def\l|{\left|}
\def\r|{\right|}
\def\l({\left(}
\def\r){\right)}
\def\l[{\left[}
\def\r]{\right]}
\renewcommand{\E}[1]{\mathbb{E}\left[ #1 \right]}
\def\X{{\mathbf X}}
\def\Y{{\mathbf Y}}
\begin{document} 
\title{Scalable Mutual Information Estimation using Dependence Graphs}
\name{\hspace{2cm}Morteza Noshad, Yu Zeng, Alfred O. Hero III\sthanks{This research was partially supported by ARO grant W911NF-15-1-0479.}}
\address{University of Michigan, Electrical Engineering and Computer Science, Ann Arbor, Michigan, U.S.A}
\maketitle
\begin{abstract}
The Mutual Information (MI) is an often used measure of dependency between two random variables utilized in information theory, statistics and machine learning. Recently several MI estimators have been proposed that can achieve parametric MSE convergence rate. However, most of the previously proposed estimators have high computational complexity of at least $O(N^2)$.
We propose a unified method for empirical non-parametric estimation of general MI function between random vectors in $\mathbb{R}^d$ based on $N$ i.i.d. samples. The reduced complexity MI estimator, called the ensemble dependency graph estimator (EDGE), combines randomized locality sensitive hashing (LSH), dependency graphs, and ensemble bias-reduction methods. We prove that EDGE achieves optimal computational complexity $O(N)$, and can achieve the optimal parametric MSE rate of $O(1/N)$ if the density is $d$ times differentiable. To the best of our knowledge EDGE is the first non-parametric MI estimator that can achieve parametric MSE rates with linear time complexity. We illustrate the utility of EDGE for the analysis of the information plane (IP) in deep learning. Using EDGE we shed light on a controversy on whether or not the  compression property of information bottleneck (IB) in fact holds for ReLu and other rectification functions in deep neural networks (DNN). 

\end{abstract}
\section{Introduction}

The Mutual Information (MI) is an often used measure of dependency between two random variables or vectors \cite{cover2012}, and it has a wide range of applications in information theory  \cite{cover2012} and machine learning \cite{class,hyvarinen2000independent}. Non-parametric MI estimation methods have been studied that use estimation strategies including KSG \cite{KSG}, KDE \cite{KDE} and Parzen window density estimation \cite{Parzen}. The performance of these estimators has been evaluated and compared based on both empirical studies \cite{khan} and asymptotic analysis \cite{kandasamy}. Recently several MI estimators have been proposed that can achieve parametric MSE rate of convergence. For example, in \cite{Poczos2014_2} a KDE plug-in estimator for R\'{e}nyi divergence and mutual information achieves the MSE rate of $O(1/N)$ when the densities are at least $d$ times differentiable. Another KDE based mutual information estimator was proposed in \cite{kandasamy} that can achieve the MSE rate of $O(1/N)$ when the densities are $d/2$ times differentiable. 
Recently Moon et al \cite{moon2017} and Gao et al \cite{Gao2017} respectively proposed KDE and KNN based MI estimators for random variables with mixtures of continuous and discrete components.
Most of these estimators, however, have high  computational cost and require knowledge of the density support boundary. 

In this paper we propose a reduced complexity MI estimator called the ensemble dependency graph estimator (EDGE). The estimator combines  randomized locality sensitive hashing (LSH), dependency graphs, and ensemble bias-reduction methods. A dependence graph is a bipartite directed graph consisting of two sets of nodes $V$ and $U$. The data points are mapped to the sets $V$ and $U$ using a randomized LSH function $H$ that depends on a hash parameter $\epsilon$. Each node is assigned a weight that is proportional to the number of hash collisions. Likewise, each edge between the vertices $v_i$ and $u_j$ has a weight proportional to the number of $(X_k,Y_k)$ pairs mapped to the node pairs $(v_i,u_j)$.
For a given value of the hash parameter $\epsilon$, a base estimator of MI is proposed as a weighted average of non-linearly transformed of the edge weights. The proposed EDGE estimator of MI is obtained by applying the method of weighted ensemble bias reduction \cite{Kevin16,moon2017} to a set of base estimators with different hash parameters.
This estimator is a non-trivial extension of the LSH divergence estimator defined in \cite{noshad_AISTAT}. LSH-based methods have previously been used for KNN search and graph constructions problems \cite{hash_KNN_graph, LSH_KNN}, and they result in fast and low complexity algorithms.

Recently, Shwartz-Ziv and Tishby utilized MI to study the training process in Deep Neural Networks (DNN) \cite{Tishby}. Let $X$, $T$ and $Y$ respectively denote the input, hidden and output layers. The authors of  \cite{Tishby} introduced the information bottleneck (IB) that represents the tradeoff between two mutual information measures: $I(X,T)$ and $I(T,Y)$. They observed that the training process of a DNN consists of two distinct phases; $1)$ an initial fitting phase in which $I(T,Y)$ increases, and $2)$ a subsequent compression phase in which $I(X,T)$ decreases. Saxe {\em et al} in \cite{saxe} countered the claim of \cite{Tishby}, asserting that this compression property is not universal, rather it depends on the specific activation function. Specifically, they claimed that the compression property does not hold for ReLu activation functions. The authors of \cite{Tishby} challenged these claims, arguing that the authors of \cite{saxe}  had not observed compression due to poor estimates of the MI. We use our proposed rate-optimal ensemble MI estimator to explore this controversy, observing that our estimator of MI does exhibit the compression phenomenon in the ReLU network studied by \cite{saxe}.  

Our contributions are as follows:

\begin{itemize}
\item 
To the best of our knowledge the proposed MI estimator is the first estimator to have 
 linear complexity and can achieve the optimal  MSE rate of $O(1/N)$.

\item 
The proposed MI estimator provides a simplified and unified treatment of mixed continuous-discrete variables. This is due to the hash function approach that is adopted.


\item EDGE is applied to IB theory of deep learning, and provides evidence that the compression property does indeed occur in ReLu DNNs, contrary to the claims of \cite{saxe}. 

\end{itemize}
The rest of the paper is organized as follows. In Section\ref{prob_def}, we introduce the general definition of MI and define the dependence graph. 
In Section \ref{Base_Estimator}, we introduce the hash based MI estimator and give theory for the bias and variance. In section \ref{EDGE_sec} we introduce the ensemble dependence graph MI estimator (EDGE) and show how the ensemble estimation method can be used to improve the convergence rates.
Finally, in Section \ref{experiments} we provide numerical results as well as study the IP in DNNs.\vspace{-0.4cm}


\section{Mutual Information}\label{prob_def}

In this section, we introduce the general mutual information function based on the f-divergence measure. Then, we define a consistent estimator for the mutual information function. 
Consider the probability measures $P$ and $Q$ on a Euclidean space $\mathcal{X}$. Let $g:(0,\infty)\to\mathbb{R}$ be a convex function with $g(1)=0$. The f-divergence between $P$ and $Q$ can be defined as follows \cite{Yuri2016, csiszar1995}.
\begin{align}\label{Div_def}
D(P\|Q):=\mathbb{E}_Q\of[g\of{\frac{dP}{dQ}}].
\end{align}

\begin{definition}[Mutual Information:]
Let $\mathcal{X}$ and $\mathcal{Y}$ be Euclidean spaces and let $P_{XY}$ be a probability measure on the space $\mathcal{X}\times\mathcal{Y}$. For any measurable sets $A\subseteq \mathcal{X}$ and $B\subseteq \mathcal{Y}$, we define the marginal probability measures $P_X(A):=P_{XY}(A\times \mathcal{Y})$ and $P_Y(B):=P_{XY}(\mathcal{X}\times B)$. Similar to \cite{Yuri2016, Gao2017}, the general MI denoted by $I(X,Y)$ is defined as
\begin{align} \label{def_g}
D(P_{XY}\|P_XP_Y)=\mathop{\mathbb{E}}_{P_XP_Y}\of[g\of{\frac{dP_{XY}}{dP_XP_Y}}],
\end{align}
where $\frac{dP_{XY}}{dP_XP_Y}$ is the Radon-Nikodym derivative, and $g:(0,\infty)\rightarrow \mathbb{R}$ is, as in \eqref{Div_def} a convex function with $g(1)=0$. Shannon mutual information is a particular cases of \eqref{Div_def} for which $g(x)=x\log x$. 
\end{definition}


\subsection{\textbf{Dependence Graphs}}

Consider $N$ i.i.d samples $(X_i,Y_i)$, $1\leq i\leq N$ drawn from the probability measure $P_{XY}$, defined on the space $\mathcal{X}\times \mathcal{Y}$. 
Define the sets $\mathbf{X}=\{X_1,X_2,...,X_N\}$ and $\mathbf{Y}=\{Y_1,Y_2,...,Y_N\}$.
The dependence graph $G(X,Y)$ is a directed bipartite graph, consisting of two sets of nodes $V$ and $U$ with cardinalities denoted as $|V|$ and $|U|$, and the set of edges $E_G$. Each point in the sets $\mathbf{X}$ and $\mathbf{Y}$ is mapped to the nodes in the sets $U$ and $V$, respectively, using the hash function $H$, described as follows.

\begin{figure}	
	\centering
	\includegraphics[width=0.9\columnwidth]{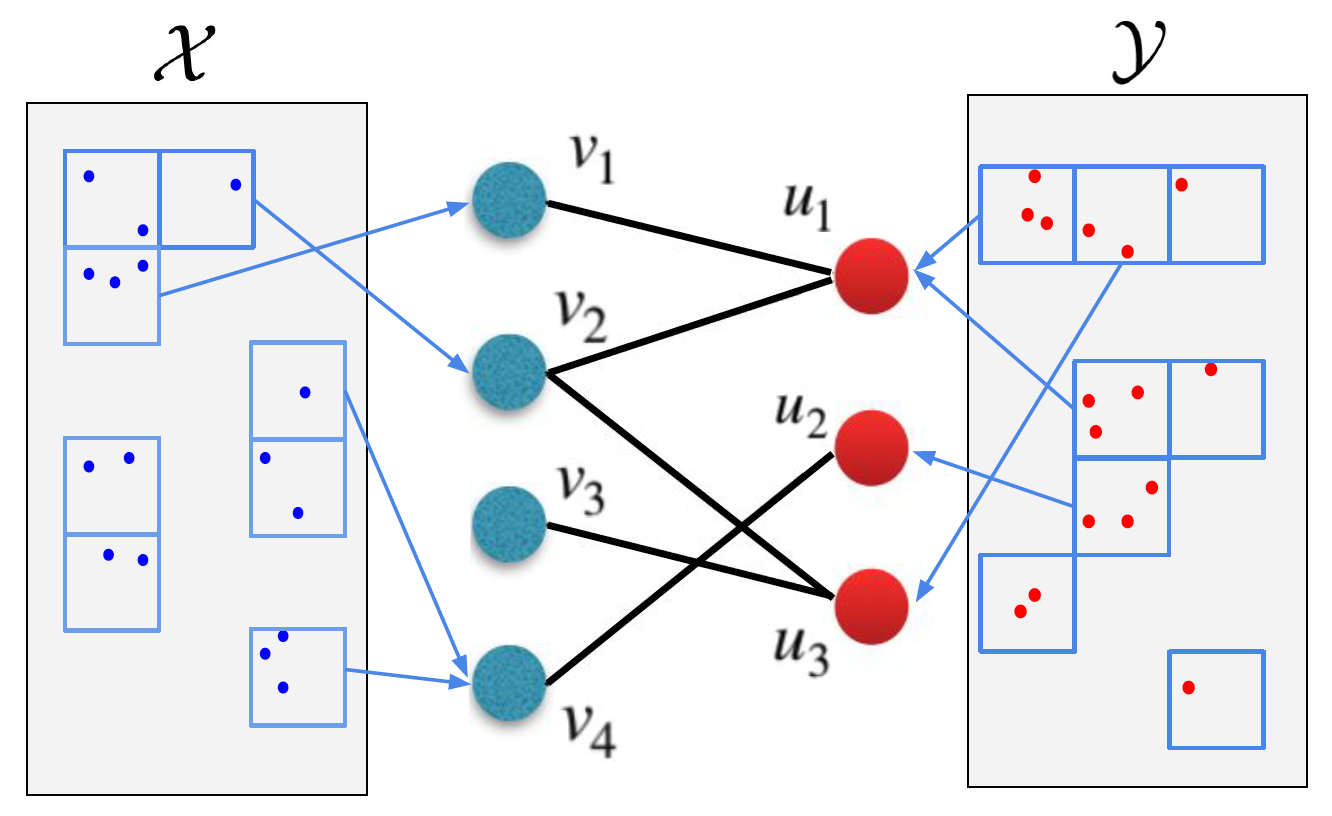}
	\vspace{-0.4cm} 
    \caption{Sample dependence graph with $4$ and $3$ respective distinct hash values of $\mathbf{X}$ and $\mathbf{Y}$ data jointly encoded with LSH, and the corresponding dependency edges.\vspace{-0.3cm} }
	\label{fig_graph}
\end{figure}

A vector valued hash function $H$ is defined in a similar way as defined in \cite{noshad_AISTAT}. First, define the vector valued hash function $H_1: \mathbb{R}^d\to \mathbb{Z}^d$ as 
\begin{align}\label{H1_def}
H_1(x)=\of[h_1(x_1), h_1(x_2), ... ,h_1(x_d)], 
\end{align}
where $x_i$ denotes the $i$th component of the vector $x$. In \eqref{H1_def}, each scalar hash function $h_1(x_i): \mathbb{R}\to \mathbb{Z}$ is given by
\begin{align}\label{def_eps}
h_1(x_i)=\floor{\frac{x_i+b}{\epsilon}},
\end{align}
for a fixed $\epsilon>0$, where $\lfloor y\rfloor$ denotes the floor function (the smallest integer value less than or equal to $y$), and $b$ is a fixed random variable in $[0,\epsilon]$. Let $\mathcal{F}:=\{1,2,..,F\}$, where $F:=c_HN$ and $c_H$ is a fixed tunable integer. We define a random hash function $H_2:\mathbb{Z}^d\to \mathcal{F}$ with a uniform density on the output and consider the combined hashing function
\begin{align}\label{Hash_def}
H(x):=H_2(H_1(x)),
\end{align}
which maps the points in $\mathbb{R}^d$ to $\mathcal{F}$.

$H(x)$ reveals the index of the mapped vertex in $G(X,Y)$. The weights $\omega_i$ and $\omega'_j$ corresponding to the nodes $v_i$ and $u_j$, and $\omega_{ij}$, the weight of the edge $(v_i,u_j)$, are defined as follows.
\begin{align}
\omega_i=\frac{N_i}{N},\hspace{0.1\linewidth}
\omega'_j=\frac{M_j}{N},\hspace{0.1\linewidth}
\omega_{ij}=\frac{N_{ij}N}{N_iM_j},
\end{align}
where $N_i$ and $M_j$ respectively are the the number of hash collisions at the vertices $v_i$ and $u_j$, and $N_{ij}$ is the number of joint collisions of the nodes $(X_k,Y_k)$ at the vertex pairs $(v_i,u_j)$. The number of hash collisions is defined as the number of instances of the input variables map to the same output value. In particular, 
\begin{align}
N_{ij}:=\#\{(X_k,Y_k) \text{ s.t  } H(X_k)=i \text{ and  } H(Y_k)=j \}.
\end{align}
Fig. \ref{fig_graph} represents a sample dependence graph. Note that the nodes and edges with zero collisions do not show up in the dependence graph. 

\vspace{-0.4cm}
\section{The Base Estimator of MI}
\vspace{-0.3cm}
\label{Base_Estimator}
\subsection{\textbf{Assumptions}}
\vspace{-0.2cm}
\begin{definition}[]
The following are the assumptions we make on the probability measures and $g$:

\textbf{A1. } The support sets $\mathcal{X}$ and $\mathcal{Y}$ are bounded.

\textbf{A2. } The following supremum exists and is bounded:\vspace{-0.1cm} $$\sup_{P_XP_Y}g\of{\frac{dP_{XY}}{dP_XP_Y}}\leq U.\vspace{-0.1cm}$$

\textbf{A3.} Let $x_D$ and $x_C$ respectively denote the discrete and continuous components of the vector $x$. Also let $f_{X_C}(x_C)$ and $p_{X_D}(x_D)$ respectively denote density and pmf functions of these components associated with the  probability measure $P_X$. The density functions $f_{X_C}(x_C)$, $f_{Y_C}(y_C)$, $f_{X_CY_C}(x_C,y_C)$, and the conditional densities $f_{X_C|X_D}(x_C|x_D)$, $f_{Y_C|Y_D}(y_C|y_D)$, $f_{X_CY_C|X_DY_D}(x_C,y_C|x_D,y_D)$ are H\"{o}lder continuous.

\end{definition}

\begin{definition}[H\"{o}lder continuous functions: ]\label{Holder}
Given a support set $\mathcal{X}$, a function $f:\mathcal{X} \to \mathbb{R}$ is called H\"{o}lder continuous with parameter $0<\gamma\leq 1$, if there exists a positive constant $G_f$, possibly depending on $f$, such that for every $x\neq y \in \mathcal{X}$,
\begin{equation}\label{Holder_eq}
|f(y)-f(x)|\leq G_f\|y-x\|^{\gamma}.
\end{equation}
\end{definition}

\textbf{A4. } Assume that the function $g$ in \eqref{def_g} is Lipschitz continuous; i.e. $g$ is H\"{o}lder continuous with $\gamma=1$.
\vspace{-0.3cm}
\subsection{\textbf{The Base Estimator of MI}}
For a fixed value of the hash parameter $\epsilon$, we propose the following base estimator of MI \eqref{def_g} function based on the dependence graph:
\begin{equation}\label{est_def}
\widehat{I}(X,Y):=\sum_{e_{ij}\in E_G} \omega_{i}\omega'_{j}\widetilde{g}\of{\omega_{ij}},
\end{equation}
where the summation is over all edges $e_{ij}:(v_i\to u_j)$ of $G(X,Y)$ having non-zero weight and $\widetilde{g}(x):=\max\{g(x),U\}$. 

When $X$ and $Y$ are strongly dependent, each point $X_k$ hashed into the bucket (vertex) $v_i$ corresponds to a unique hash value for $Y_k$ in $U$. Therefore, asymptotically $\omega_{ij}\to 1$ and the mutual information estimation in \eqref{est_def} takes its maximum value. On the other hand, when $X$ and $Y$ are independent, each point $X_k$ hashed into the bucket (vertex) $v_i$ may be associated with different values of $Y_k$, and therefore asymptotically $\omega_{ij}\to \omega_j$ and the Shannon MI estimation tends to $0$.


\subsection{Various LSH Functions}
There are various types of LSH functions \cite{SimHash, p_stable, LSH_survey}, and all of them share the common property that they map similar items to the same bins with high probability.

In equations \eqref{H1_def} and \eqref{def_eps} we considered a simple floor function on the scaled input, however in general, any other type of LSH might be used for our estimation method. In particular, the hash functions based on random projections can reduce the dimensionality of data. SimHash \cite{SimHash}, which is based on cosine distance, and the LSH based on p-stable distributions \cite{p_stable} are among well known LSH functions that reduce the dimension of data. For example, the LSH based on p-stable
distribution is defined similarly to the floor hash function in \eqref{H1_def} and \eqref{def_eps}, except that the input vector is projected on random hyperplanes with p-stable distributions. The formal definition is $H_{p-stable}:\mathbb{R}^d\to \mathbb{Z}^r$,

\begin{align}
    H_{p-stable}(x)= H_1(XW),
\end{align}
where $H_1$ is defined in \eqref{H1_def}, and $W$ is a $d\times r$ matrix with entries chosen independently from a stable distribution. For high-dimensional datasets one can choose $r<<d$ in order to reduce the  dimensionality. Finally, note that for theoretical analysis, we only focus on performance of the simple floor hash function defined in \eqref{H1_def} and \eqref{def_eps}.

\vspace{-0.2cm}
\subsection{\textbf{Convergence Rates} }\label{results}
In the following theorems we state upper bounds on the bias and variance rates of the proposed MI estimator \eqref{est_def}. The proofs are given in appendices A and B.
We define the notations $\mathbb{B}[\hat{T}]=\mathbb{E}[\hat{T}]-T$ for bias and $\mathbb{V}[\hat{T}]=\mathbb{E}[\hat{T}^2]-\mathbb{E}[\hat{T}]^2$ for variance of $\hat{T}$. The following theorem states an upper bound on the bias.

\begin{theorem} \label{bias_theorem}
Let $d=d_X+d_Y$ be the dimension of the joint random variable $(X,Y)$. Under the aforementioned assumptions \textbf{A1-A4}, and assuming that the density functions in \textbf{A3} have bounded derivatives up to order $q\geq 0$, the following upper bound on the bias of the estimator in \eqref{est_def} holds
\begin{align}\label{bias_terms}
\mathbb{B}\of[\widehat{I}(X,Y)] = 
     \begin{cases}
       O\of{\epsilon^\gamma}+O\of{\frac{1}{N\epsilon^d}}, &\quad q=0\vspace{2mm}\\
       \sum_{i=1}^q C_i\epsilon^{i}+O\of{\epsilon^q}+O\of{\frac{1}{N\epsilon^d}} &\quad q\geq 1,
     \end{cases}
     \vspace{-0.5cm}
\end{align}
where  $\epsilon$ is the hash parameter in \eqref{def_eps}, $\gamma$ is the smoothness parameter in \eqref{Holder_eq}, and $C_i$ are real constants.

\end{theorem}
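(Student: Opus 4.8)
The plan is to separate the two error sources that appear in the bound: a deterministic discretization error from replacing the integral defining $I(X,Y)$ by a sum over hash cells, and a stochastic error from replacing the true cell masses by empirical collision counts. Write $p_i:=P_X(H^{-1}(i))$, $q_j:=P_Y(H^{-1}(j))$, $p_{ij}:=P_{XY}(H^{-1}(i)\times H^{-1}(j))$ for the true masses of the hash cells and $L_{ij}:=p_{ij}/(p_iq_j)$ for the cell-level density ratio. Introducing the idealized (population) version of \eqref{est_def},
\begin{align}
S:=\sum_{ij}p_i\,q_j\,\widetilde g\of{L_{ij}},
\end{align}
I would decompose $\mathbb{B}\of[\widehat I(X,Y)]=(\mathbb{E}[\widehat I]-S)+(S-I(X,Y))$ and bound the two pieces separately, the first yielding the $O(1/(N\epsilon^d))$ term and the second the $\epsilon$-dependent terms.

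For the stochastic part, conditionally on the hash functions the $N$ samples fall into cells according to a multinomial law, so $N_i,M_j,N_{ij}$ have binomial marginals with means of order $N\epsilon^{d_X},N\epsilon^{d_Y},N\epsilon^{d}$. The map $(\omega_i,\omega'_j,\omega_{ij})\mapsto \omega_i\omega'_j\widetilde g(\omega_{ij})$ is smooth on the truncated range (where, as for $g(x)=x\log x$, $\widetilde g$ has bounded derivatives), so I would Taylor-expand it about the means $(p_i,q_j,L_{ij})$. The first-order fluctuations vanish in expectation, and the second-order terms are governed by the relative variances of the counts, each of order at most $1/(N\epsilon^d)$ (the joint count being the sparsest and dominant); summing the curvature contributions against $\sum_i p_i=\sum_j q_j=1$ gives $\mathbb{E}[\widehat I]-S=O(1/(N\epsilon^d))$. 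Two points need care: cells with $N\epsilon^d=O(1)$ samples, where the expansion is invalid and one instead bounds the contribution by the truncation level $U$ (assumption \textbf{A2}) times the small total mass of such cells; and the secondary hashing $H_2$ into $F=c_HN$ buckets, whose collisions between distinct $H_1$-cells must be shown, as in \cite{noshad_AISTAT}, to contribute only at order $O(1/(N\epsilon^d))$.

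For the deterministic part I would write $I(X,Y)=\sum_{ij}\int_{C_i\times C_j}f_X(x)f_Y(y)\,g(L(x,y))\,dx\,dy$ with $C_i,C_j$ the hash cells, and compare cellwise with $S$. In the case $q=0$, Hölder continuity of the densities (\textbf{A3}) gives $|L(x,y)-L_{ij}|=O(\epsilon^\gamma)$ on each cell via \eqref{Holder_eq}, and Lipschitzness of $g$ (\textbf{A4}) with $\sum_i p_i=\sum_j q_j=1$ yields $S-I=O(\epsilon^\gamma)$. When the densities have bounded derivatives up to order $q\geq1$, I would Taylor-expand $f_X,f_Y,f_{XY}$ about the cell centers, substitute into $L_{ij}$ and compose with $g$; integrating the resulting polynomial and, crucially, averaging over the random grid offset $b\in[0,\epsilon]$ from \eqref{def_eps} removes the dependence on cell location and produces $\sum_{i=1}^{q}C_i\epsilon^i+O(\epsilon^q)$ with coefficients $C_i$ independent of $\epsilon$ and $N$. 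This last property is exactly the form required by the weighted ensemble construction of Section \ref{EDGE_sec}.

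The main obstacle is establishing this clean polynomial structure with $\epsilon$-free, $N$-free constants: the ratio $L_{ij}=p_{ij}/(p_iq_j)$ couples the numerator and denominator expansions, so composing with the nonlinear $g$ mixes their Taylor coefficients, and it is the averaging over $b$ that must be used to suppress position-dependent terms and leave only universal constants. A closely related difficulty is reconciling the stochastic and deterministic expansions in sparsely occupied cells, where $N\epsilon^d$ is not large: there the truncation $\widetilde g$ and assumption \textbf{A2} are what keep the net bias at the stated $O(1/(N\epsilon^d))$ rather than the $O(1/\sqrt{N\epsilon^d})$ one would naively obtain from Lipschitzness of $g$ alone.
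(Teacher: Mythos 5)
Your outline is sound and would yield the stated bound, but it follows a genuinely different route from the paper. You decompose through the population statistic $S=\sum_{ij}p_iq_j\widetilde g(p_{ij}/(p_iq_j))$ and handle the stochastic piece $\mathbb{E}[\widehat I]-S$ by a second-order delta-method expansion in the multinomial counts, using the vanishing of first-order fluctuations to land directly on $O(1/(N\epsilon^d))$. The paper instead never forms $S$: it uses the algebraic identity $\omega_i\omega'_j\widetilde g(\omega_{ij})=\nu_{ij}\widetilde h(\omega_{ij})$ with $\widetilde h(x)=\widetilde g(x)/x$ and $\nu_{ij}=N_{ij}/N$ to rewrite the sum over edges as an empirical average $\frac1N\sum_k\widetilde h(\cdot)$ over the samples, so the outer expectation becomes $\mathbb{E}_{(X,Y)\sim P_{XY}}$ directly; it then analyzes $\mathbb{E}[\omega_{ij}]$ pointwise via a local expansion of cell probabilities (your "deterministic part," done essentially identically, with Hölder for $q=0$ and Taylor for $q\geq1$), and controls the replacement of expectations-of-ratios by ratios-of-expectations with a first-order Lipschitz-plus-Cauchy--Schwarz lemma, which costs $O(\sqrt{\mathbb{V}[\omega_{ij}]})=O(1/\sqrt N)$. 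The trade-offs are real: your delta-method step needs $\widetilde g$ to be twice differentiable on the truncated range (strictly more than assumption \textbf{A4}, which only gives Lipschitzness), and you must separately dispose of sparsely occupied cells via the truncation level $U$, as you note; in exchange you avoid the $O(1/\sqrt N)$ fluctuation term that the paper's Lipschitz argument produces (and which, notably, appears in the paper's intermediate bounds but not in the final statement \eqref{bias_terms}). Your observation that averaging over the random offset $b$ is what makes the constants $C_i$ position-independent is also a point the paper leaves implicit by citation; it is indeed essential for the ensemble construction of Section \ref{EDGE_sec}. Both routes must, and do, treat the $H_2$-collision bias separately at order $O(1/(N\epsilon^d))$.
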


In \eqref{bias_terms}, the hash parameter, $\epsilon$ needs to be a function of $N$ to ensure that the bias converges to zero. For the case of $q=0$, the optimum bias is achieved when $\epsilon=\of{\frac{1}{N}}^{\gamma/(\gamma+d)}$. When $q\geq 1$, the optimum bias is achieved for $\epsilon=\of{\frac{1}{N}}^{1/(1+d)}$.

\begin{theorem}\label{variance}
Under the assumptions \textbf{A1-A4} the variance of the proposed estimator can be bounded as $\mathbb{V}\of[\widehat{I}(X,Y)]\leq O\of{\frac{1}{N}}$.
Further, the variance of the variable $\omega_{ij}$ is also upper bounded by $O(1/N)$.
\end{theorem}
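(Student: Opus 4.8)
The plan is to control the variance of $\widehat{I}(X,Y)$ via the Efron--Stein inequality, which is well suited to estimators that are symmetric functions of $N$ i.i.d.\ samples. I would view $\widehat{I}$ as a function of the $N$ pairs $Z_k=(X_k,Y_k)$ and bound the effect of resampling a single coordinate. Concretely, let $\widehat{I}$ and $\widehat{I}'$ denote the estimator computed on samples differing only in the $k$th pair ($Z_k$ replaced by an independent copy $Z_k'$); the Efron--Stein inequality gives
\begin{align}\label{ES_plan}
\mathbb{V}\of[\widehat{I}(X,Y)] \;\leq\; \frac{1}{2}\sum_{k=1}^{N} \mathbb{E}\of[\big(\widehat{I}-\widehat{I}'\big)^2].
\end{align}
By symmetry every term is equal, so it suffices to bound $\mathbb{E}\of[(\widehat{I}-\widehat{I}')^2]$ for a single $k$ and multiply by $N$. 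Thus the crux is showing that changing one sample pair perturbs $\widehat{I}$ by $O(1/N)$, so that the per-term contribution is $O(1/N^2)$ and the sum is $O(1/N)$.

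To estimate the single-swap perturbation, I would trace how replacing $Z_k$ propagates through the weights. Changing one pair moves at most two collision counts in each marginal (it leaves one bucket and enters another on the $X$ side, and likewise on the $Y$ side), and affects at most a constant number of joint counts $N_{ij}$. The node weights $\omega_i=N_i/N$ and $\omega'_j=M_j/N$ therefore change by $O(1/N)$ on a bounded number of vertices. The key structural fact is that $\widetilde g$ is bounded (by construction $\widetilde g(x)=\max\{g(x),U\}$ is capped above using assumption \textbf{A2}) and, since $g$ is Lipschitz by \textbf{A4}, $\widetilde g$ is Lipschitz as well; hence each summand $\omega_i\omega_j'\,\widetilde g(\omega_{ij})$ in \eqref{est_def} is bounded. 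Because only $O(1)$ edges and their incident node weights are altered, and each alteration changes the product $\omega_i\omega'_j\widetilde g(\omega_{ij})$ by $O(1/N)$, we obtain $|\widehat{I}-\widehat{I}'|=O(1/N)$ deterministically. Substituting into \eqref{ES_plan} yields $\mathbb{V}\of[\widehat{I}(X,Y)]\leq \tfrac{1}{2}\,N\cdot O(1/N^2)=O(1/N)$.

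For the second claim, the variance of $\omega_{ij}=N_{ij}N/(N_iM_j)$, I would argue similarly: conditioned on the bucket assignments, $N_{ij}$ concentrates around its mean, and a single-sample swap again perturbs $\omega_{ij}$ by $O(1/N)$ by the same counting argument, so a direct Efron--Stein or concentration estimate gives the $O(1/N)$ bound.

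The main obstacle I anticipate is making the ``only $O(1)$ edges change and each by $O(1/N)$'' step fully rigorous, because an edge weight $\omega_{ij}=N_{ij}N/(N_iM_j)$ has denominators that can be as small as a single collision. One must verify that the boundedness of $\widetilde g$ (rather than Lipschitz control of $\omega_{ij}$ itself) is what tames these small-bucket edges: even if $\omega_{ij}$ fluctuates wildly, $\widetilde g(\omega_{ij})$ stays bounded, and the prefactor $\omega_i\omega'_j$ is small precisely for the rarely-occupied buckets. Carefully pairing the bounded nonlinearity with the $O(1/N)$ mass shift at each affected vertex is the delicate part; the remainder is a routine Efron--Stein summation.
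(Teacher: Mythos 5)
Your proposal matches the paper's proof essentially step for step: the paper also applies the Efron--Stein inequality to $\mathbf{Z}=(Z_1,\dots,Z_N)$, observes that resampling one pair changes at most two of the $N_i$, two of the $M_j$, and two of the $N_{ij}$ by one each, and uses the boundedness and Lipschitz property of $\widetilde g$ (assumptions \textbf{A2}, \textbf{A4}) to show the resulting single-swap perturbation of $\widehat I$ is $O(1/N)$, giving $\mathbb{V}[\widehat I]\leq O(1/N)$. The second claim about $\mathbb{V}[\omega_{ij}]$ is likewise handled in the paper by the same Efron--Stein counting argument (its Lemma on the weight variances), so your sketch is aligned with the paper's treatment there as well.
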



\subsection{Computational Complexity}

We analyze the computational complexity of the proposed estimator. The procedure for estimation of MI equivalent to the proposed estimator in \eqref{est_def} is given in Algorithm \ref{algo}.

\begin{algorithm} \label{algo}
\DontPrintSemicolon
\SetKwInOut{Input}{Input}\SetKwInOut{Output}{Output}
\Input{$N$ i.i.d samples $(X_k,Y_k)$, $1\leq k\leq N$.}

\BlankLine

\For {each $k\in 1:N$}{
		$i\leftarrow H(\mathbf{X_k})$\\
		$j\leftarrow H(\mathbf{Y_k})$\\
		 $N_i\leftarrow N_i+1$\\
         $M_j\leftarrow M_j+1$\\
         $N_{ij}\leftarrow N_{ij}+1$
		 }
$\omega_i\leftarrow N_i/N; \omega'_j\leftarrow M_j/N; \omega_{ij}\leftarrow N_{ij}N/N_iM_j;$
$\widehat{I}\leftarrow\sum_{e_{ij}} \omega_{i}\omega'_{j}\widetilde{g}\of{\omega_{ij}}$

\Output{$\widehat{I}$}

\caption{MI Dependence Graph Estimator}
\end{algorithm}

We go over all of the data points and map them using the hash function $H$. We compute the number of marginal and joint collisions ($N_i,M'_j, N_{ij}$) and based on these we compute the vertex and edge weights. Note that computing the hashing of all of the data points takes about $O(N)$ time. Finally in the last line we compute the MI estimate by going over all of the edges $e_{ij}$. The number the edges is upper bounded by $O(N)$, since each edge correspond to at least one pair of $(X_i,Y_i)$. Finally, note that for high-dimensional data sets, the computational of computing the hash function of each input may depend on $d$, however, is would not be greater than $O(d)$. Hence, overall the the computational complexity of computing the proposed MI estimate is linear with respect to both $N$ and $d$.

\subsection{Comparison to the Other Estimation Methods}

So far, various estimation methods for information measures (entropy, divergence and mutual information) have been proposed, most of which are based on kernel density estimates (KDE) \cite{Kevin16}, k-nearest neighbors (KNN) \cite{Noshad2017} or histogram binning \cite{histogram}. Certain KDE and KNN based estimators can achieve the optimal parametric MSE rate \cite{Kevin16,Noshad2017,krishnamurthy2014}, however, implementation of the KDE and KNN methods respectively have $O(N^2)$ and $O(kN\log N)$ computational complexity, where $N$ is the number of samples. One could probably could approximation methods for KDE and KNN, however, there would be no theoretical guarantees for the estimation based on these approximations. Empirical histograms, on the other hand, are simpler and easier to implement, however, their convergence rate is not as good as the KNN and KDE based methods \cite{histogram}. 

LSH methods have previously been applied to the approximate nearest neighbor search, however, it has never been directly utilized for estimation of densities or information theoretic quantities.  
The simplest LSH function considered in \eqref{def_eps} has similarities with the histogram estimator in terms of binning, however, there are also certain differences. As opposed to the LSH based method, the histogram binning requires a pre-knowledge of the support set or needs an extra computation to estimate the support set. The number of 
the bins in the histogram estimator gets exponentially large with increasing dimension which results in a huge computational complexity for high-dimensional datasets. In addition, most of the bins would be empty. In contrast, the LSH based method results in no empty hash bins, the number of the bins is upper bounded by $O(N)$, and the computational complexity is linear in dimension and the number of samples. The plug-in methods including histograms, KDE and KNN require the estimation of the densities $p_X$, $p_Y$ and $p_{XY}$ for computation of mutual information, while the proposed LSH-based method finds the mapping of $\X$ and $\Y$ data points, and then estimate the mutual information, based on the hash collisions. Finally, by giving an accurate bias and variance rates for the base LSH estimator, we use an ensemble estimation technique to achieve the optimal parametric convergence rate, discussed in the following section.

\section{Ensemble Dependence Graph Estimator (EDGE)}\label{EDGE_sec}
\vspace{-0.3cm}
\begin{definition}[]
Given the expression for the bias in Theorem \ref{bias_theorem}, the ensemble estimation technique proposed in \cite{Kevin16} can be applied to improve the convergence rate of the MI estimator \eqref{est_def}. Assume that the densities in \textbf{A3} have continuous bounded derivatives up to the order $q$, where $q\geq d$.
Let $\mathcal{T}:=\{t_1,...,t_T\}$ be a set of index values with $t_i<c$, where $c>0$ is a constant. Let $\epsilon(t):=tN^{-1/2d}$. For a given set of weights $w(t)$ the weighted ensemble estimator is then defined as 
\begin{align}\label{EDGE_def}
\widehat{I}_w:=\sum_{t\in \mathcal{T}}w(t)\widehat{I}_{\epsilon(t)},
\end{align}
where $\widehat{I}_{\epsilon(t)}$ is the mutual information estimator with the parameter $\epsilon(t)$. Using \eqref{bias_terms}, for $q>0$ the bias of the weighted ensemble estimator \eqref{EDGE_def} takes the form
\begin{equation}
\mathbb{B}(\hat{I}_w) = \sum_{i=1}^q Ci N^{-\frac{i}{2d}} \sum_{t\in \mathcal{T}} w(t) t^{i} +O\of{{\frac{t^d}{N^{1/2}}}}+O\of{\frac{1}{N\epsilon^d}}
\label{Ensemble_bias}
\end{equation}

Given the form (\ref{Ensemble_bias}), as long as $T\geq q$, we can select the  weights $w(t)$ to force to zero the slowly decaying terms in (\ref{Ensemble_bias}), i.e.  $\sum_{t\in \tau} w(t)t^{i/d}=0$ subject to the constraint that$\sum_{t\in \tau} w(t)=1$. However, $T$ should be strictly greater than $q$ in order to control the variance, which is upper bounded by the euclidean norm squared of the weights  $\omega$. In particular we have the following theorem (the proof is given in Appendix C):

\begin{theorem} \label{ensemble_theorem}
For $T>d$ let  $w_0$ be the solution to:
\begin{align}
\min_w &\qquad \|w\|_2 \nonumber\\
\textit{subject to} &\qquad \sum_{t\in \mathcal{T}}w(t)=1, \nonumber\\
&\qquad \sum_{t\in \mathcal{T}}w(t)t^{i}=0, i\in \mathbb{N}, i\leq d.
\end{align}
Then the MSE rate of the ensemble estimator $\widehat{I}_{w_0}$ is $O(1/N)$.
\end{theorem}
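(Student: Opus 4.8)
The plan is to show that the weighted ensemble construction eliminates the dominant bias terms while keeping the variance at the parametric rate $O(1/N)$, so that the mean-squared error, being the sum of squared bias and variance, is itself $O(1/N)$.

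First I would substitute the specific choice $\epsilon(t)=tN^{-1/2d}$ into the single-estimator bias expansion \eqref{bias_terms}. With this substitution the $i$-th polynomial bias term $C_i\epsilon^i$ becomes $C_i t^i N^{-i/2d}$, and the remainder $O(1/(N\epsilon^d))$ becomes $O(N^{-1/2})$; this is exactly the form recorded in \eqref{Ensemble_bias}. Taking the weighted average $\widehat{I}_{w_0}=\sum_{t\in\mathcal{T}}w_0(t)\widehat{I}_{\epsilon(t)}$ and using linearity of expectation, the bias of the ensemble is $\sum_{i=1}^{d} C_i N^{-i/2d}\bigl(\sum_{t\in\mathcal{T}}w_0(t)t^i\bigr)+O(N^{-1/2})$. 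The linear constraints in the optimization, $\sum_t w_0(t)t^i=0$ for $1\le i\le d$, annihilate every term in the sum indexed by $i$, so the bias collapses to the residual $O(N^{-1/2})$, which when squared contributes $O(1/N)$. I would note that the constraint $\sum_t w_0(t)=1$ guarantees the ensemble remains an asymptotically unbiased estimator (it does not introduce a spurious constant), and feasibility of the linear system requires $T\ge d+1$, i.e.\ $T>d$, matching the hypothesis.

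Next I would control the variance. Since $\widehat{I}_{w_0}$ is a linear combination of the base estimators with deterministic weights, I would bound $\mathbb{V}[\widehat{I}_{w_0}]$ by $\|w_0\|_2^2$ times the maximal base-estimator variance (up to cross-covariance terms handled by Cauchy--Schwarz), invoking Theorem \ref{variance} which gives $\mathbb{V}[\widehat{I}_{\epsilon(t)}]\le O(1/N)$ for each $t$. Because the objective of the optimization is precisely to minimize $\|w_0\|_2$, the optimal weights keep $\|w_0\|_2$ bounded by a constant independent of $N$ (the feasible set is a fixed affine subspace determined by a Vandermonde-type system in the fixed indices $t_1,\dots,t_T$), so $\mathbb{V}[\widehat{I}_{w_0}]\le O(1/N)$. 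Combining the squared-bias bound $O(1/N)$ with the variance bound $O(1/N)$ yields $\mathrm{MSE}(\widehat{I}_{w_0})=O(1/N)$.

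I expect the main obstacle to be the variance step rather than the bias step. The bias cancellation is essentially algebraic once the substitution is made, but bounding $\mathbb{V}[\widehat{I}_{w_0}]$ rigorously requires verifying that the minimal $\|w_0\|_2$ stays $O(1)$ as $N$ grows and that the covariances between base estimators at different hash parameters do not inflate the variance beyond $\|w_0\|_2^2\cdot O(1/N)$. The first point follows because the constraint matrix is a fixed Vandermonde system in the constants $t_i$ and does not depend on $N$, so its solution norm is a fixed constant; the second follows from the Cauchy--Schwarz inequality $\mathrm{Cov}(\widehat{I}_{\epsilon(s)},\widehat{I}_{\epsilon(t)})\le \sqrt{\mathbb{V}[\widehat{I}_{\epsilon(s)}]\,\mathbb{V}[\widehat{I}_{\epsilon(t)}]}\le O(1/N)$ applied termwise. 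Careful bookkeeping of these $T^2$ covariance terms, all uniformly $O(1/N)$ by Theorem \ref{variance}, is the only place where real care is needed.
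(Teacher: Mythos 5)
Your proposal is correct and follows essentially the same route as the paper, which simply invokes the general ensemble estimation theorem of Moon et al.\ (\cite{Kevin16}, Theorem 4) with $\psi_i(t)=t^i$ and $\phi_{i,d}(N)=N^{-i/2d}$ applied to the bias expansion of Theorem \ref{bias_theorem}; your argument is just the unpacked content of that cited theorem (constraint-driven cancellation of the $N^{-i/2d}$ bias terms plus an $O(1)$ bound on $\|w_0\|_2$ from the $N$-independent Vandermonde system, combined with the $O(1/N)$ base variance of Theorem \ref{variance}). The only cosmetic difference is that you truncate the bias sum at $i=d$, absorbing the terms with $d<i\leq q$ into the $O(N^{-1/2})$ remainder, which is valid since $N^{-i/2d}\leq N^{-1/2}$ there.
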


\end{definition}

\section{Experiments}\label{experiments}

We first use simulated data to compare the proposed estimator to the competing MI estimators Ensemble KDE (EKDE)\cite{moon2017}, and generalized KSG \cite{Gao2017}. 
Both of these estimators work on mixed continuous-discrete variables. We also apply EDGE to study the information bottleneck in different networks trained on MNIST hand-written datset.

Fig. \ref{figure1}, shows the MSE estimation rate of Shannon MI between the continuous random variables $X$ and $Y$ having the  relation $Y=X+a N_U$,
where $X$ is a $2D$ Gaussian random variable with the mean $[0, 0]$ and covariance matrix $C=I_2$. Here $I_d$ denote the $d$-dimensional identity matrix. $N_U$ is a uniform random vector with the support $\mathcal{N}_U=[0,1]\times [0,1]$.  
We compute the MSE of each estimator for different sample sizes. The MSE rates of EDGE, EKDE and KSG are compared for $a=1/5$. Further, the MSE rate of EDGE is investigated for noise levels of $a=\{1/10,1/5,1/2,1\}$. As the dependency between $X$ and $Y$ increases the MSE rate becomes slower. 


\begin{figure}	
	\centering
	\includegraphics[width=1\columnwidth]{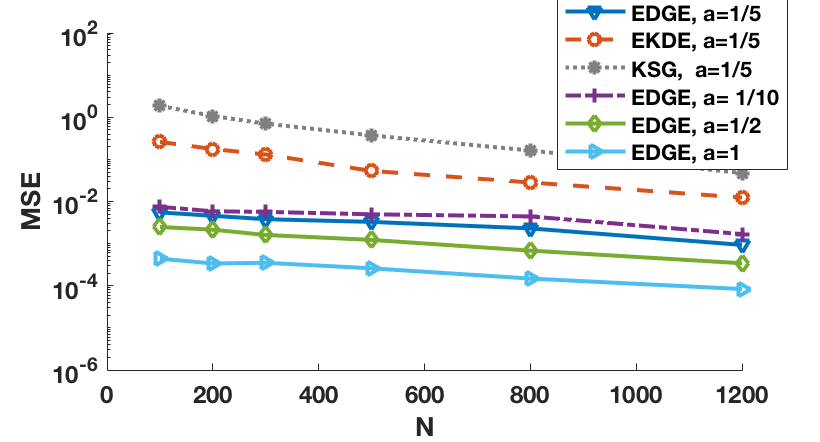}
	\vspace{-0.6cm}
    \caption{MSE comparison of EDGE, EDKE and KSG Shannon MI estimators. $X$ is a $2D$ Gaussian random variable with unit covariance matrix. $Y=X+aN_U$, where $N_U$ is a uniform noise. The MSE rates of EDGE, EKDE and KSG are compared for various values of $a$.}
	\label{figure1}
    \vspace{-1.2em}
\end{figure}

Fig. \ref{figure2}, shows the MSE estimation rate of Shannon MI between a discrete random variables $X$ and a continuous random variable $Y$.
We have $X\in \{1,2,3,4\}$, and each $X=x$ is associated with multivariate Gaussian random vector $Y$, with $d=4$, the expectation $[x/2, 0, 0, 0]$ and covariance matrix $C=I_4$. 
In general in Figures \ref{figure1} and \ref{figure2}, EDGE has better convergence rate than EKDE and KSG estimators. Fig. \ref{fig:runtime} represents the runtime comparison for the same experiment as in Fig. \ref{figure2}. It can be seen from this graph how fast our proposed estimator performs compared to the other other methods.

\begin{figure}	
	\centering
	\includegraphics[width=1\columnwidth]{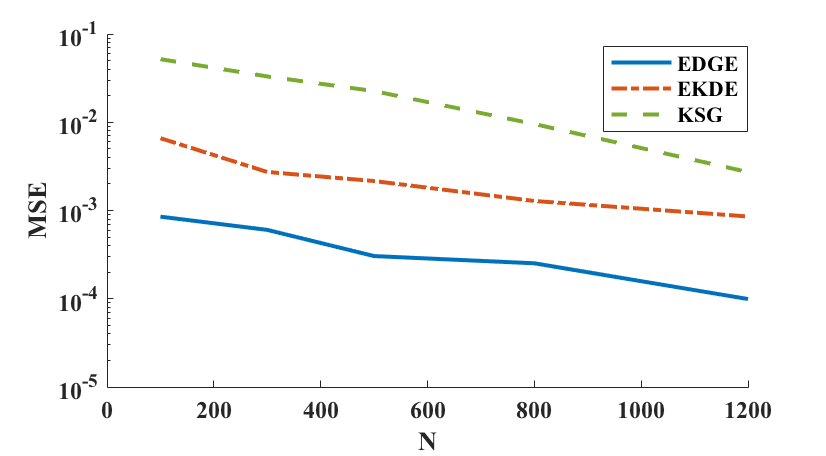}
    \caption{MSE comparison of EDGE, EDKE and KSG Shannon MI estimators. $X\in \{1,2,3,4\}$, and each $X=x$ is associated with multivariate Gaussian random vector $Y$, with $d=4$, the mean $[x/2, 0, 0, 0]$ and covariance matrix $C=I_4$.}
    \vspace{-1.5em}
    \label{figure2}
\end{figure}

\begin{figure}	
	\centering
	\includegraphics[width=1\columnwidth]{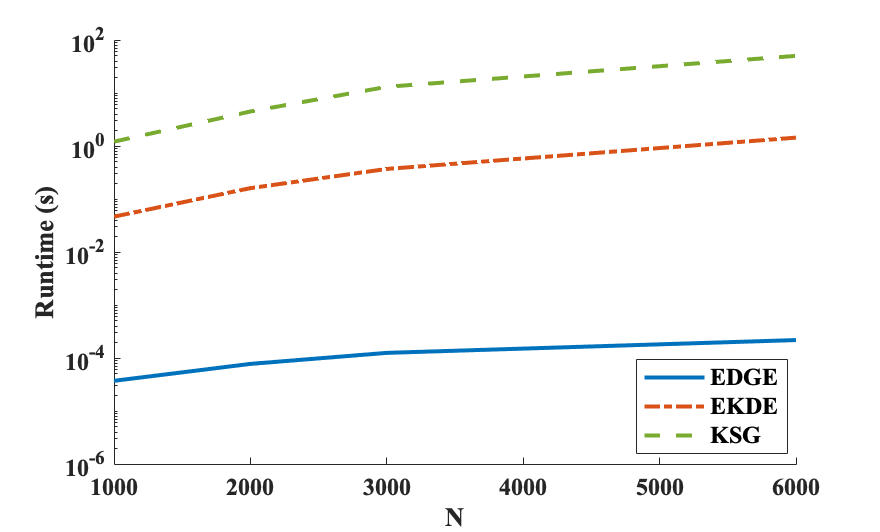}
    \caption{Runtime comparison of EDGE, EDKE and KSG Shannon MI estimators. $X\in \{1,2,3,4\}$, and each $X=x$ is associated with multivariate Gaussian random vector $Y$, with $d=4$, the mean $[x/2, 0, 0, 0]$ and covariance matrix $C=I_4$.}
    \vspace{-1.5em}
	\label{fig:runtime}
\end{figure}


Next, we use EDGE to study the information bottleneck \cite{Tishby} in DNNs. Fig. \ref{IP} represents the information plane of a DNN with $4$ fully connected hidden layers of width $784-1024-20-20-20-10$ with tanh and ReLU activations. The sequence of colored points shows different iterations of the training process. Each gray line connects the points with the same iterations for diferent layers. The left most sequence of points corresponds to the last hidden layer and the right most sequence of points corresponds to the first hidden layer.
The network is trained with Adam optimization with a learning rate of $0.003$ and cross-entropy loss functions to classify the MNIST handwritten-digits dataset. 
We repeat the experiment for $20$ iterations with different randomized initializations and take the average over all experiments. In both cases of ReLU and tanh activations we observe some degree of compression in all of the hidden layers. However, the amount of compressions is different for ReLU and tanh activations. The average test accuracy in both of these networks are around $0.98$. This network is the same as the one studied in \cite{saxe}, for which it is claimed that no compression happens with a ReLU activation. The base estimator used in \cite{saxe} provides KDE-based lower and upper bounds on the true MI \cite{kolchinsky2017}. According to our experiments (not shown) the upper bound is in some cases twice as large as the lower bound. In contrast, our proposed ensemble method estimates the exact mutual information with significantly higher accuracy. 



\begin{figure}	
	\centering
	\includegraphics[width=1\columnwidth]{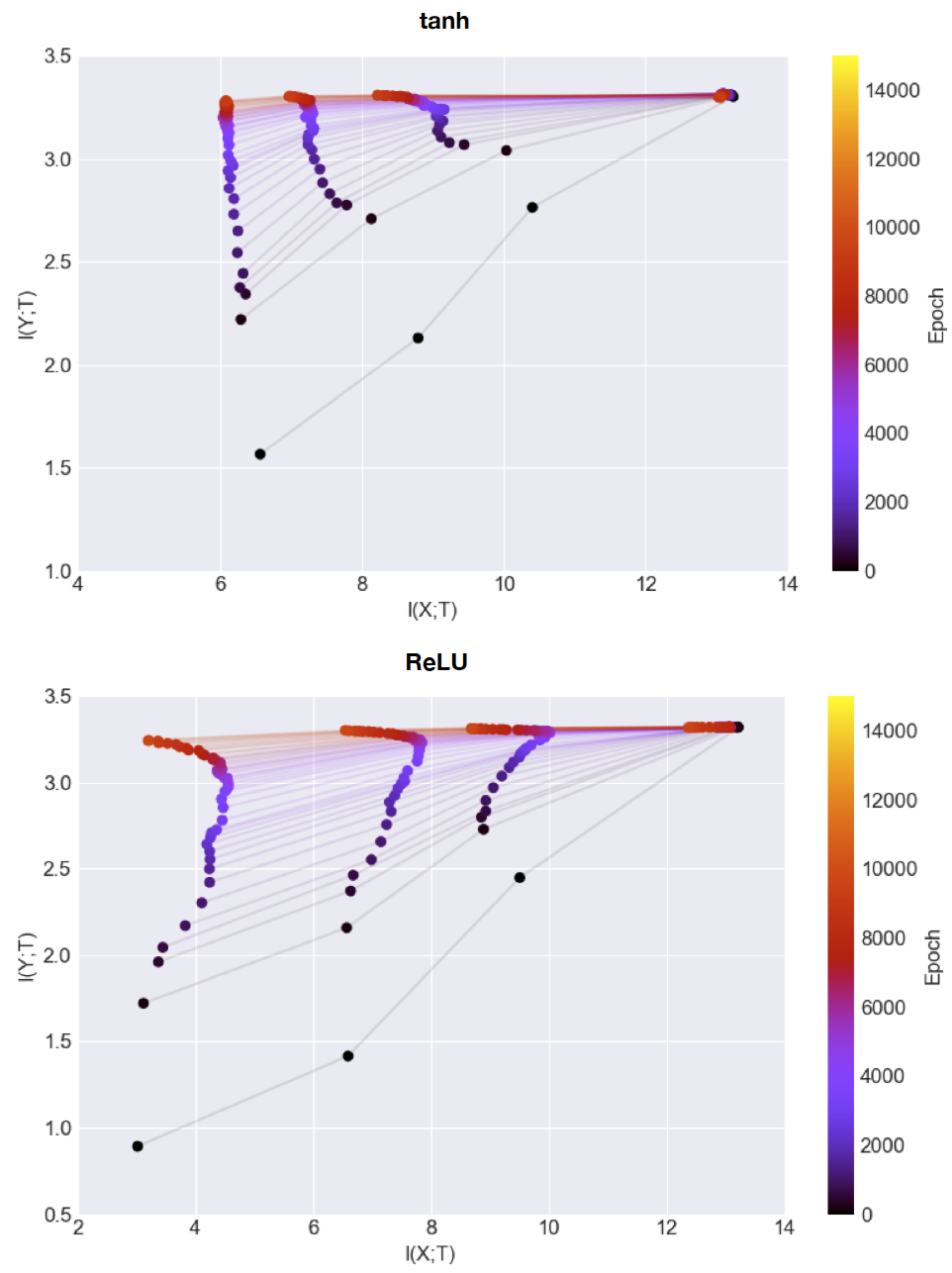}
    \caption{Information plane estimated using EDGE for a neural network of size $784-1024-20-20-20-10$ trained on the MNIST dataset with tanh (top) and ReLU (bottom) activations.}
	\label{IP}
    \vspace{-1.2em}
\end{figure}



Fig. \ref{200_100_60_30_relu} represents the information plane for another network with $4$ fully connected hidden layers of width $784-200-100-60-30-10$ with ReLU activation. The network is trained with Adam optimization with a learning rate of $0.003$ and cross-entropy loss functions to classify the MNIST handwritten-digits dataset. Again, we observe compression for this network with ReLU activation.

\begin{figure}	
	\centering
	\includegraphics[width=1\columnwidth]{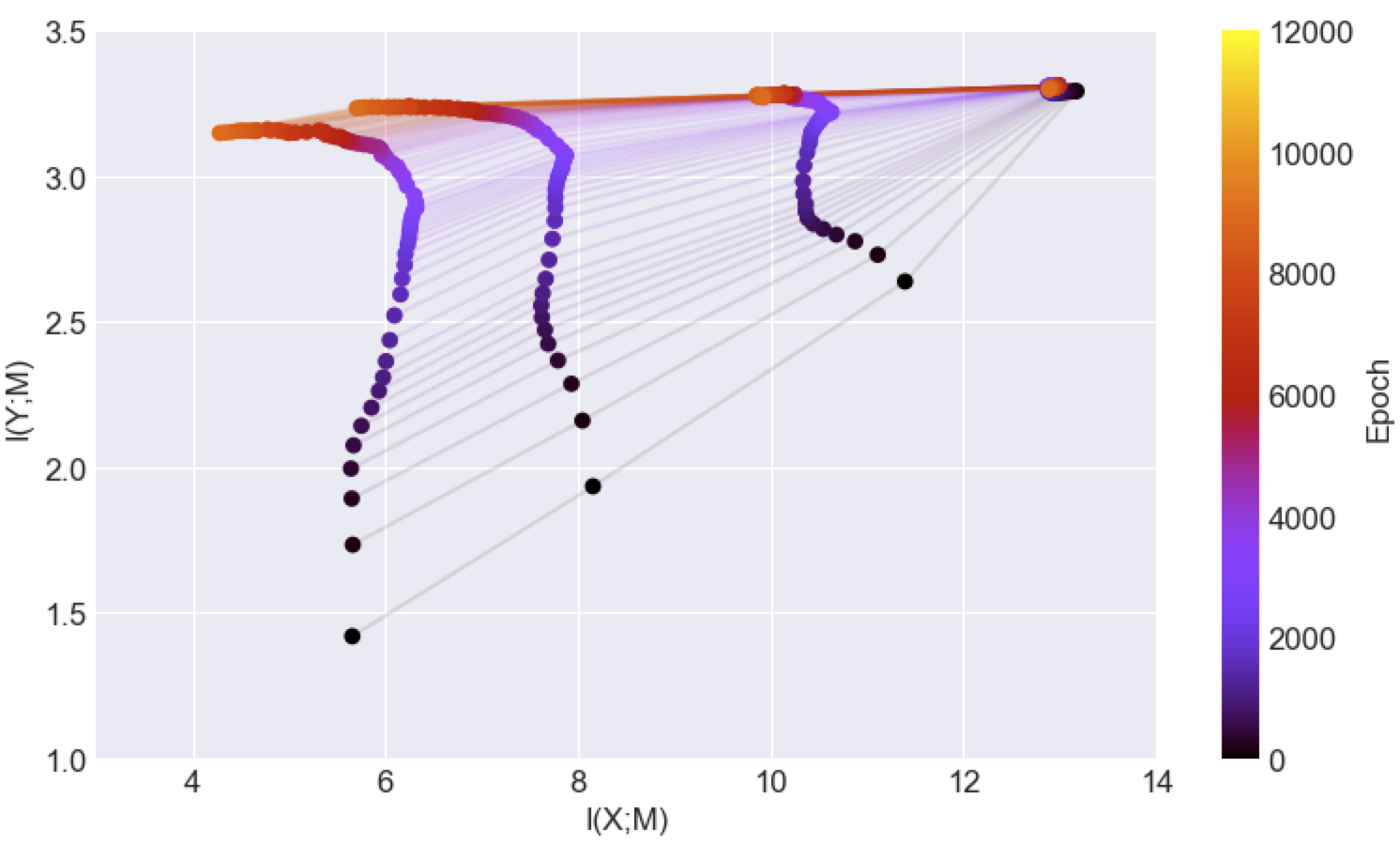}
    \caption{Information plane estimated using EDGE for a neural network of size $784-200-100-60-30-10$ trained on the MNIST dataset with ReLU activation.}
	\label{200_100_60_30_relu}
    \vspace{-1.2em}
\end{figure}


Finally, we study the information plane curves in a CNN with three convolutioal ReLU layers and a dense ReLU layer. The convolutional layers respectively have depths of $4,8,16$ and  the dense layer has the dimension $256$. 
Max-pooling functions are used in the second and third layers. Note although for a certain initialization of the weights this model can achieve the test accuracy of $0.99$, the average test accuracy (over different weight initializations) is around $0.95$. That's why the converged point of the last layer has smaller $I(T,Y)$ compared to the examples in Fig. \ref{IP}, which achieves the average test accuracy of $0.98$. Another interesting point about the information plane in CNN is that the convolutional layers have larger $I(T,Y)$ compared to the hidden layers in the fully connected models in \ref{IP} and \ref{200_100_60_30_relu}, which implies that the convolutional layers can extract almost all of the useful information about the labels after small number of iterations.


\begin{figure}	
	\centering
	\includegraphics[width=1\columnwidth]{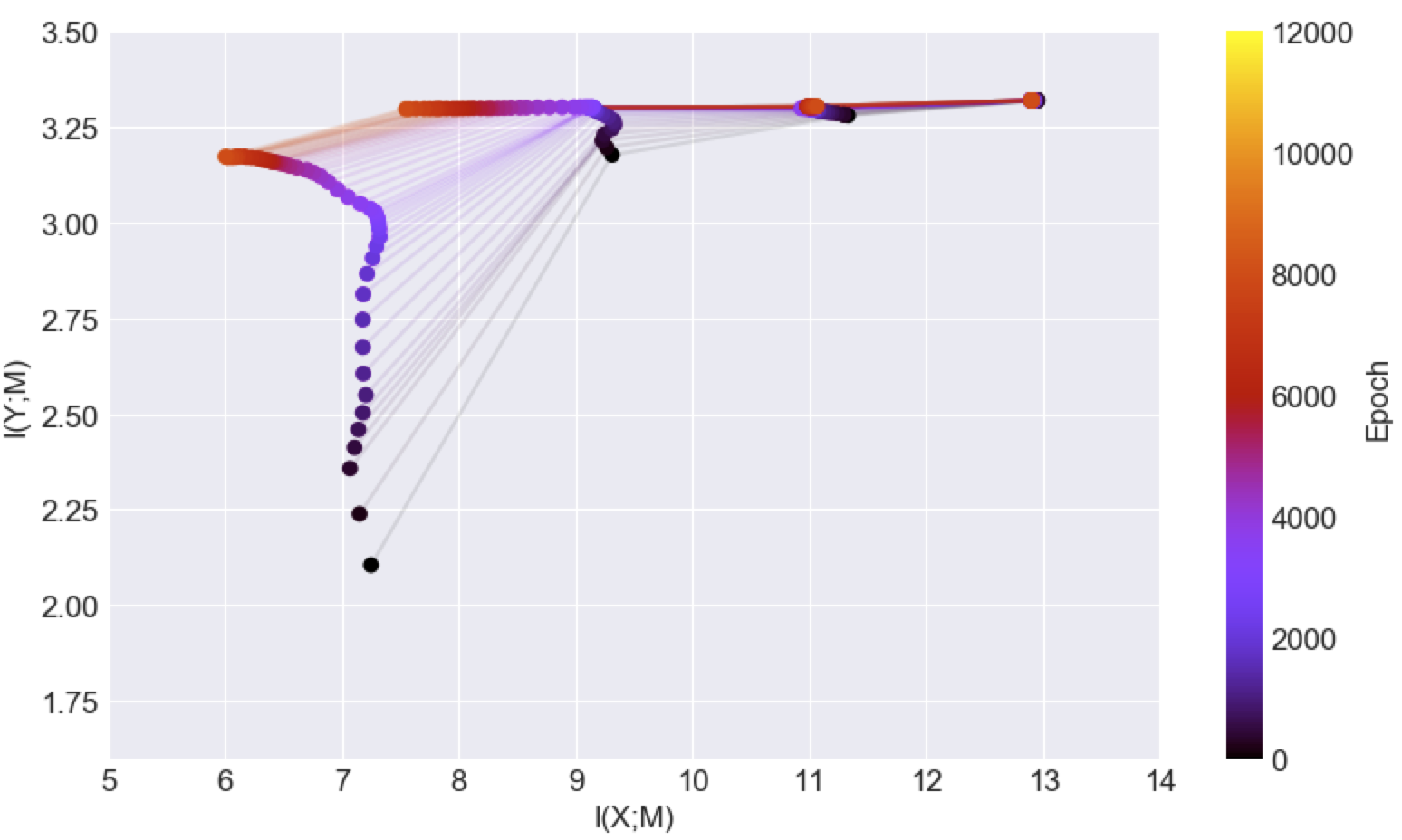}
    \caption{Information plane estimated using EDGE for a CNN consisting of three convolutioal ReLU layers with the respective depths of $4,8,16$ and a dense ReLU layer with the size of $256$.}
	\label{CNN}
    \vspace{-1.2em}
\end{figure}

\vspace{-0.2cm}
\section{Conclusion}
\vspace{-0.25cm}
In this paper we proposed a fast non-parametric estimation method for MI based on random hashing, dependence graphs, and ensemble estimation. Remarkably, the proposed estimator has linear computational complexity and attains optimal (parametric) rates of MSE convergence. We provided bias and variance convergence rate, and validated our results by numerical experiments.


\bibliographystyle{ieeetr}
\bibliography{citations.bib}

\onecolumn
\newpage



\section*{A. Bias Proof}
We first prove a theorem that establishes an upper bound on the number of vertices in $V$ and $U$.

\begin{lemma}\label{Lemma_num_ver}
Cardinality of the sets $U$ and $V$ are upper bounded as $|V|\leq O\of{\epsilon^{-d}}$ and $|U|\leq O\of{\epsilon^{-d}}$, respectively.
\end{lemma}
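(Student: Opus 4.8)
The plan is to bound the number of graph nodes by the number of occupied cells of the underlying $\epsilon$-grid, and then to control that count using boundedness of the support. First I would observe that, after the empty buckets are discarded (as the paper notes, nodes with zero collisions do not appear), $|V|$ is exactly the number of distinct values taken by $H(X_k)$ over $1\le k\le N$, and likewise $|U|$ is the number of distinct values of $H(Y_k)$. Since the combined hash satisfies $H=H_2\circ H_1$ and, for each realization of the randomization, $H_2$ is an ordinary function of its argument, composition can only merge distinct inputs and never separate them; hence the number of distinct values of $H$ on the data is at most the number of distinct values of $H_1$. It therefore suffices to bound the number of distinct integer vectors $H_1(X_k)$.

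Next I would exploit the geometric meaning of $H_1$. By \eqref{H1_def}--\eqref{def_eps}, $H_1(x)$ is the index of the axis-aligned cube of side $\epsilon$ (in the grid shifted by the random offset $b$) that contains $x$, and two points receive the same $H_1$ value precisely when they lie in the same such cube. Consequently the number of distinct values of $H_1$ on $\mathbf{X}$ equals the number of grid cubes containing at least one sample, which is at most the number of grid cubes that meet the support $\mathcal{X}$.

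The main quantitative step is then a covering count. By Assumption \textbf{A1} the support $\mathcal{X}\subseteq\mathbb{R}^{d_X}$ is bounded, so it is contained in a cube of some fixed side length $R$ independent of $N$ and $\epsilon$. The number of $\epsilon$-cubes meeting such a region is at most $\prod_{i=1}^{d_X}(\lceil R/\epsilon\rceil+1)=O(\epsilon^{-d_X})$. Since $d_X\le d$ and, in the asymptotics of interest, $\epsilon<1$, we have $\epsilon^{-d_X}\le\epsilon^{-d}$, giving $|V|=O(\epsilon^{-d})$; the identical argument applied to $\mathbf{Y}$ and $\mathcal{Y}$ yields $|U|=O(\epsilon^{-d})$.

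I do not expect a genuine obstacle here, since the result is a purely deterministic covering bound valid for every realization of the shift $b$ and of $H_2$, so no probabilistic averaging is required. The only points needing care are the reduction through $H_2$ --- one must note that $H_2$ can only decrease the node count, so that the $O(\epsilon^{-d})$ cell count, rather than the far larger codomain size $F=c_HN$, is the operative bound --- and the minor bookkeeping that the per-marginal dimensions $d_X,d_Y$ are each dominated by $d$, which makes the uniform statement $O(\epsilon^{-d})$ valid (if somewhat loose).
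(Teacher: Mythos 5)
Your proof is correct and follows essentially the same route as the paper: reduce $|V|$ to the number $L_X$ of distinct $H_1$-outputs (noting $H_2$ can only merge values), then bound $L_X$ by $O(\epsilon^{-d})$. The only difference is that you carry out the $\epsilon$-grid covering count of the bounded support explicitly, whereas the paper delegates that step to Lemma 4.1 of the cited LSH divergence paper.
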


\begin{proof}

Let $\{\widetilde{X}_i\}_{i=1}^{L_X}$ and $\{\widetilde{Y}_i\}_{i=1}^{L_Y}$\vspace{2mm} respectively denote distinct outputs of $H_1$ with the $N$ i.i.d points $X_k$ and $Y_k$ as input. Then according to \cite{noshad_AISTAT} (Lemma 4.1), we have
\begin{align}\label{bound_LX_LY}
L_X\leq O\of{\epsilon^{-d}}, \hspace{4mm} L_Y\leq O\of{\epsilon^{-d}}.
\end{align}

Simply, because of the deterministic feature of $H_2$, the number of its distinct inputs is greater than or equal to the number of its outputs. So, $|V|\leq L_X$ and $|U|\leq L_Y$. Using the bounds in \eqref{bound_LX_LY} completes the proof.  
\end{proof}

The bias proof is based on analyzing the hash function defined in \eqref{Hash_def}. The proof consists of two main steps: 1) Finding the expectation of hash collisions of $H_1$; and 2) Analyzing the collision error of $H_2$. An important point about $H_1$ and $H_2$ is that collision of $H_1$ plays a crucial role in our estimator, while the collision of $H_2$ adds extra bias to the estimator. We introduce the following events to formally define these two biases:

\begin{align}
E_{ij}: & \text{The event that there is an edge between the vertices $v_i$ and $u_j$}.\nonumber\\ 
E_{\mathcal{E}}: & \text{The event that $\mathcal{E}$ is the set of all edges in $G$, i.e. $\mathcal{E}=E_G$}.\nonumber\\
E_{v_i}^{>0}: & \text{The event that there is at least one vector from $\{\widetilde{X}_i\}_{i=1}^{L_X}$ that maps to $v_i$ using $H_2$ }.\nonumber\\
E_{v_i}^{=1}: & \text{The event that there is exactly one vector from $\{\widetilde{X}_i\}_{i=1}^{L_X}$ that maps to $v_i$ using $H_2$ }.\nonumber\\
E_{v_i}^{>1}: & \text{The event that there are at least two vectors from $\{\widetilde{X}_i\}_{i=1}^{L_X}$ that map to $v_i$ using $H_2$ }.
\end{align}
$E_{u_i}^{>0}$, $E_{u_i}^{=1}$ and $E_{u_i}^{>1}$ are defined similarly. Further, let for any event $E$, $\overline{E}$ denote the complementary event. Let $E_{ij}^{=1}:= E_{v_i}^{=1}\cap E_{u_i}^{=1}$. Finally, we define $E^{=1}:=\of{\cap_{i=1}^{L_X}E_{v_i}^{=1}}\cap\of{\cap_{j=1}^{L_Y}E_{u_j}^{=1}}$, which represent the event of no collision.

Consider the notation $\widetilde{I}(X,Y):=\sum_{e_{ij}\in E_G} \omega_{i}\omega'_{j}\widetilde{g}\of{\omega_{ij}}$ (Notice the difference from the definition in \eqref{est_def}). We can derive its expectation as

\begin{align}\label{bias_proof_1}
\E{\widetilde{I}(X,Y)}&=\E{\sum_{e_{ij}\in E_G} \omega_{i}\omega'_{j}\widetilde{g}\of{\omega_{ij}}\bigg\vert E_G}\nonumber\\
&=\sum_{e_{ij}\in E_G} \E{\omega_{i}\omega'_{j}\widetilde{g}\of{\omega_{ij}}\vert E_{ij}}\nonumber\\
&=\sum_{e_{ij}\in E_G}P(E_{ij}^{=1}\vert E_{ij}) \E{\omega_{i}\omega'_{j}\widetilde{g}\of{\omega_{ij}}\middle\vert E_{ij}^{=1},E_{ij}}\nonumber\\
&+\sum_{e_{ij}\in E_G}P\of{\overline{E_{ij}^{=1}}\vert E_{ij}} \E{\omega_{i}\omega'_{j}\widetilde{g}\of{\omega_{ij}}\middle\vert \overline{E_{ij}^{=1}},E_{ij}}.
\end{align}
Note that the second term in \eqref{bias_proof_1} is the bias due to collision of $H_2$ and we denote this term by $\mathbb{B}_H$. 

\subsection{Bias Due to Collision}

The following lemma states an upper bound on the bias error caused by $H_2$.
\begin{lemma}\label{collision_lemma}
The bias error due to collision of $H_2$ is upper bounded as
\begin{align}
\mathbb{B}_H \leq O\of{\frac{1}{\epsilon^{d}N}}.
\end{align}
\end{lemma}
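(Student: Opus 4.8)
The plan is to read $\mathbb{B}_H$ as the expected contribution to $\widetilde{I}(X,Y)$ coming only from those edges whose endpoints suffer an $H_2$-collision, i.e. the event $\overline{E_{ij}^{=1}}$, and to show this contribution is small because only a vanishing fraction of the data mass sits on collision-affected buckets. First I would use the boundedness of the nonlinearity: the truncation in the definition of $\widetilde{g}$ together with assumption \textbf{A2} makes $\widetilde{g}$ uniformly bounded, $|\widetilde{g}(\omega_{ij})|\le G$ for a constant $G$. This lets me discard the nonlinear factor and reduce the claim to a bound on the expected weight mass carried by collision-affected edges,
\begin{align}
|\mathbb{B}_H|\le G\,\mathbb{E}\left[\sum_{e_{ij}\in E_G,\ \overline{E_{ij}^{=1}}}\omega_i\omega'_j\right].\nonumber
\end{align}

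Next I would pass from edges to single buckets. Since $\overline{E_{ij}^{=1}}\subseteq E_{v_i}^{>1}\cup E_{u_j}^{>1}$, every collision-affected edge has at least one endpoint that is a multi-occupancy bucket. Using the deterministic identities $\sum_i\omega_i=1$ and $\sum_j\omega'_j=1$ (each of the $N$ points lands in exactly one bucket), a union bound factorizes the double sum into
\begin{align}
\sum_{e_{ij},\ \overline{E_{ij}^{=1}}}\omega_i\omega'_j\le \sum_{i:\,E_{v_i}^{>1}}\omega_i+\sum_{j:\,E_{u_j}^{>1}}\omega'_j,\nonumber
\end{align}
so it suffices to bound the expected fraction of points that fall into a bucket receiving two or more distinct $H_1$-values.

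The heart of the argument is then a balls-into-bins estimate for $H_2$. By Lemma \ref{Lemma_num_ver} the number of distinct $H_1$-outputs obeys $L_X,L_Y\le O(\epsilon^{-d})$, and $H_2$ maps these into $F=c_HN$ cells uniformly and (pairwise) independently. For a fixed point, conditioning on its $H_1$-value $\widetilde{X}_a$, a union bound over the remaining $L_X-1$ distinct values shows its bucket is shared with probability at most $(L_X-1)/F=O(\epsilon^{-d}/N)$. Averaging over the $N$ points gives $\mathbb{E}[\sum_{i:\,E_{v_i}^{>1}}\omega_i]\le O(1/(\epsilon^d N))$, and symmetrically for the $Y$-side; combining with the two displays above yields $|\mathbb{B}_H|\le O(1/(\epsilon^d N))$.

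The step I expect to be the main obstacle is organizing the combination so as to land on exactly the power $\epsilon^{-d}$ rather than $\epsilon^{-2d}$. A naive count of colliding pairs would produce $L_X^2/F=O(\epsilon^{-2d}/N)$ colliding buckets; the correct scaling emerges only because the quantity that actually enters the bias is the weighted mass $\sum_i\omega_i$ on such buckets, and the per-point (equivalently per-bucket) collision probability is $O(L_X/F)$, not $O(L_X^2/F)$. A related delicacy is that the collision event and the bucket weight $\omega_i=N_i/N$ are correlated, so I would deliberately bound the expected total collision mass directly rather than manipulate the per-edge conditional expectations $\mathbb{E}[\,\cdot\mid\overline{E_{ij}^{=1}},E_{ij}]$, which sidesteps the awkward per-edge conditioning appearing in \eqref{bias_proof_1}.
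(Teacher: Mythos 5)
Your proof is correct, and it takes a genuinely different route from the paper's. The paper works edge by edge: it writes $\mathbb{B}_H$ as a sum of per-edge conditional expectations, invokes Lemma \ref{X_ij} to get $P\of{E_{ij}^{>1}\vert E_{ij}}=O\of{\frac{1}{\epsilon^dN}}$, and then spends most of the effort controlling $\E{\omega_i\omega'_j\mid E_{ij}^{>1},E_{ij}}$ by decomposing $N_i=\sum_r \mathbbm{1}_{\mathcal{A}_i}(r)N'_r$ and computing the conditional inclusion probabilities $P\of{r\in\mathcal{A}_i\mid E_{v_i}^{>1}}=O\of{1/L_X}$ via the exact expressions \eqref{B_H_4}--\eqref{B_H_5}, before resumming with $\sum_{i,j}P\of{E_{ij}}\leq L_XL_Y$. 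You instead bound the \emph{unconditional} expected weight mass on collision-affected buckets: after discarding the bounded factor $\widetilde{g}$, the union bound $\overline{E_{ij}^{=1}}\subseteq E_{v_i}^{>1}\cup E_{u_j}^{>1}$ together with $\sum_i\omega_i=\sum_j\omega'_j=1$ reduces everything to a single balls-into-bins estimate, namely that a fixed data point lands in a multi-occupancy bucket with probability at most $(L_X-1)/F=O\of{\frac{1}{\epsilon^dN}}$. This is shorter, and it cleanly sidesteps the correlation between $\omega_i$ and the collision event that forces the paper into its delicate per-edge conditioning (and into the slightly fragile factorization step in \eqref{B_H_2}); your observation that the right quantity is the per-point collision probability $O(L_X/F)$ rather than the pair count $O(L_X^2/F)$ is exactly the point where a naive version would lose a factor of $\epsilon^{-d}$. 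The only things worth making explicit if you write this up: (i) you use that $\widetilde{g}$ is bounded by $U$ (as the paper also does, via \textbf{A2} and the truncation), and (ii) you need $H_2$ to be uniform and at least pairwise independent across distinct $H_1$-outputs, and $L_X,L_Y\leq O(\epsilon^{-d})$ to hold conditionally on the data so the union bound can be taken before averaging; both are consistent with what the paper assumes. One thing the paper's longer route buys is the standalone estimate of Lemma \ref{X_ij}, which is reused later in \eqref{Pij_1}, so your argument replaces the proof of Lemma \ref{collision_lemma} but not that auxiliary computation.
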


Before proving this lemma, we provide the following lemma.

\begin{lemma}\label{X_ij}
$P(E_{ij}^{=1}\vert E_{ij})$ is given by
\begin{align}\label{E_ij_eq}
P(E_{ij}^{=1}\vert E_{ij})= 1-O\of{\frac{1}{\epsilon^dN}}.
\end{align}
\end{lemma}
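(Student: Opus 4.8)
The plan is to reduce the claim to a balls-into-bins collision estimate for the uniform random hash $H_2$, which distributes the $L_X$ distinct $H_1$-outputs $\{\widetilde X_i\}_{i=1}^{L_X}$ (and the $L_Y$ outputs $\{\widetilde Y_j\}_{j=1}^{L_Y}$) independently and uniformly over the $F = c_H N$ labels in $\mathcal{F}$. On the event $E_{ij}$ that an edge joins $v_i$ and $u_j$, both buckets are occupied, so $E_{ij}^{=1}$ fails exactly when $E_{v_i}^{>1}$ or $E_{u_j}^{>1}$ holds; by the union bound it therefore suffices to show $P(E_{v_i}^{>1}\mid E_{ij}) = O(1/(\epsilon^d N))$ and symmetrically for $u_j$, after which $P(E_{ij}^{=1}\mid E_{ij}) = 1 - O(1/(\epsilon^d N))$ follows at once, which is \eqref{E_ij_eq}.

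First I would condition on the sample $(X_k,Y_k)_{k=1}^N$, which fixes the distinct outputs $\{\widetilde X_i\}$, $\{\widetilde Y_j\}$, their counts $L_X,L_Y$ (both $O(\epsilon^{-d})$ by \eqref{bound_LX_LY}), and the pairing structure; the only remaining randomness is then the i.i.d.\ uniform assignment of labels by $H_2$. Writing $p := 1/F = 1/(c_H N)$, the occupancy of $v_i$ is governed by $L_X$ independent $\mathrm{Bernoulli}(p)$ indicators $Z_a := \mathbbm{1}[H_2(\widetilde X_a)=i]$, so that $E_{v_i}^{>1} = \{\sum_a Z_a \ge 2\}$, and I would control it with the elementary binomial tail bounds $P(\textstyle\sum_a Z_a\ge 2)\le \binom{L_X}{2}p^2$ and $P(\sum_a Z_a\ge 1)\ge \tfrac12 L_X p$.

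The delicate point, and the main obstacle, is that we condition on $E_{ij}$, an event involving the \emph{joint} $X$–$Y$ hashing, rather than merely on $v_i$ being occupied; conditioning on an edge can in principle bias the occupancy of $v_i$ upward. To handle this I would further condition on the $H_2$-labels of all $Y$-outputs, so that $E_{ij}$ becomes $\{\exists a\in S:\,Z_a=1\}$, where $S\subseteq\{1,\dots,L_X\}$ is the ($Y$-determined) set of $X$-outputs having a data-partner in bucket $j$. Splitting a hypothetical second occupant of $v_i$ into an $S$-occupant or a non-$S$ occupant, I would bound $P(E_{v_i}^{>1}\mid E_{ij})$ by two pieces: the within-$S$ conditional collision probability $P(\ge 2\mid \ge 1)\le |S|\,p$, and the probability that some non-$S$ value lands in $i$, which is \emph{independent} of $E_{ij}$ and is at most $1-(1-p)^{L_X-|S|}\le L_X p$. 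Since $|S|\le L_X$, both pieces are $O(L_X p)=O(\epsilon^{-d}/(c_H N))=O(1/(\epsilon^d N))$.

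Finally, the identical argument applied to the $Y$-side gives $P(E_{u_j}^{>1}\mid E_{ij}) = O(1/(\epsilon^d N))$, and combining via the union bound yields $P(\overline{E_{ij}^{=1}}\mid E_{ij}) = O(1/(\epsilon^d N))$, which establishes \eqref{E_ij_eq}. I expect the only genuine subtlety to be the bookkeeping around the conditioning, namely the $S$ versus non-$S$ decomposition and verifying that the non-$S$ hashing is independent of $E_{ij}$; the binomial tail estimates themselves are routine.
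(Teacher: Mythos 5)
Your argument is correct and lands on the same underlying balls-into-bins estimate as the paper --- both proofs ultimately bound the failure probability by (number of remaining distinct $H_1$-outputs) $\times$ (probability of landing in a designated bin) $= O(L_Z/F) = O(1/(\epsilon^d N))$ using $L_X, L_Y \leq O(\epsilon^{-d})$ and $F = c_H N$ --- but the mechanics differ. The paper conditions on the $H_1$-outputs $\widetilde{\mathbf{x}},\widetilde{\mathbf{y}}$ and directly writes the conditional success probability as a product, $P(E_{ij}^{=1}\mid E_{ij},\widetilde{\mathbf{X}}=\widetilde{\mathbf{x}},\widetilde{\mathbf{Y}}=\widetilde{\mathbf{y}}) = O\bigl(((F-a)/F)^{L_Z-a}\bigr)$, i.e.\ it treats the edge as consuming $a$ hash values and requires the other $L_Z-a$ outputs to avoid the occupied bins; the subtlety that $E_{ij}$ does not single out one specific responsible pair is absorbed into the $O(\cdot)$. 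You instead bound the complementary event by a union bound over $E_{v_i}^{>1}$ and $E_{u_j}^{>1}$ and handle the conditioning explicitly: conditioning on the $Y$-labels turns $E_{ij}$ into $\{\exists a\in S: Z_a=1\}$, and the within-$S$ ratio bound $P(\geq 2)/P(\geq 1)\leq |S|p$ together with the independence of the non-$S$ indicators gives $O(L_Xp)$ per side. Your route is longer but buys a cleaner justification of exactly the step the paper elides, namely that conditioning on the existence of an edge does not inflate the occupancy of $v_i$ beyond an $O(L_Xp)$ correction. Two small points to tidy up: the ratio bound $P(\sum_a Z_a\geq 1)\geq \tfrac12 L_Xp$ needs $L_Xp=O(1/(\epsilon^d N))$ to be small, which is harmless since the stated bound is vacuous otherwise; and when some $\widetilde{X}_a$ coincides with some $\widetilde{Y}_b$ the corresponding $Z_a$ is fixed by your conditioning on the $Y$-labels, so it should be pulled out of the Bernoulli family --- the paper faces the same coincidence issue (its $a=1$ versus $a=2$ cases) and the bound survives in either treatment.
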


\begin{proof}

Let $\widetilde{\mathbf{X}}=\widetilde{\mathbf{x}}$ and $\widetilde{\mathbf{Y}}=\widetilde{\mathbf{y}}$ respectively abbreviate the equations 
$\widetilde{X}_1=\widetilde{x}_1,...,\widetilde{X}_{L_X}=\widetilde{x}_{L_X}$ and $\widetilde{Y}_1=\widetilde{y}_1,...,\widetilde{Y}_{L_Y}=\widetilde{y}_{L_Y}$. Let $\widetilde{\mathbf{x}}:=\{\widetilde{x}_1,\widetilde{x}_2,...,\widetilde{x}_{L_X}\}$ and $\widetilde{\mathbf{y}}:=\{\widetilde{y}_1,\widetilde{y}_2,...,\widetilde{y}_{L_Y}\}$. Define $\widetilde{\mathbf{z}}:=\widetilde{\mathbf{x}}\cup \widetilde{\mathbf{y}}$ and $L_Z:=|\widetilde{\mathbf{z}}|$.

\begin{align}
P(E_{ij}^{=1}\vert E_{ij})&=\sum_{\widetilde{\mathbf{x}},\widetilde{\mathbf{y}}} P\of{\widetilde{\mathbf{X}}=\widetilde{\mathbf{x}},\widetilde{\mathbf{Y}}=\widetilde{\mathbf{y}}\vert E_{ij}}P(E_{ij}^{=1}\vert E_{ij}, \widetilde{\mathbf{X}}=\widetilde{\mathbf{x}},\widetilde{\mathbf{Y}}=\widetilde{\mathbf{y}}).
\end{align}

Define $a=2$ for the case $i\neq j$ and $a=1$ for the case $i=j$. Then we have

\begin{align}
P(E_{ij}^{=1}\vert E_{ij})&=\sum_{\widetilde{\mathbf{x}},\widetilde{\mathbf{y}}} P\of{\widetilde{\mathbf{X}}=\widetilde{\mathbf{x}},\widetilde{\mathbf{Y}}=\widetilde{\mathbf{y}}\vert E_{ij}}O\of{\of{\frac{F-a}{F}}^{L_Z-a}}\nonumber\\
&=\sum_{\widetilde{\mathbf{x}},\widetilde{\mathbf{y}}} P\of{\widetilde{\mathbf{X}}=\widetilde{\mathbf{x}},\widetilde{\mathbf{Y}}=\widetilde{\mathbf{y}}\vert E_{ij}}\of{1-O\of{\frac{L_Z}{F}}}\nonumber\\
&\leq\sum_{\widetilde{\mathbf{x}},\widetilde{\mathbf{y}}} P\of{\widetilde{\mathbf{X}}=\widetilde{\mathbf{x}},\widetilde{\mathbf{Y}}=\widetilde{\mathbf{y}}\vert E_{ij}}\of{1-O\of{\frac{L_X+L_Y}{F}}}\nonumber\\
&=\sum_{\widetilde{\mathbf{x}},\widetilde{\mathbf{y}}} P\of{\widetilde{\mathbf{X}}=\widetilde{\mathbf{x}},\widetilde{\mathbf{Y}}=\widetilde{\mathbf{y}}\vert E_{ij}}\of{1-O\of{\frac{1}{\epsilon^dN}}}\nonumber\\
&=\of{1-O\of{\frac{1}{\epsilon^dN}}}\sum_{\widetilde{\mathbf{x}},\widetilde{\mathbf{y}}} P\of{\widetilde{\mathbf{X}}=\widetilde{\mathbf{x}},\widetilde{\mathbf{Y}}=\widetilde{\mathbf{y}}\vert E_{ij}}\nonumber\\
&=\of{1-O\of{\frac{1}{\epsilon^dN}}},
\end{align}
where in the fourth line we have used \eqref{bound_LX_LY}.
\end{proof}

\begin{proof}[\textbf{Proof of \ref{collision_lemma}}]  
$N'_i$ and $M'_j$ respectively are defined as the number of the input points $\mathbf{X}$ and $\mathbf{Y}$ mapped to the buckets $\widetilde{X}_i$ and $\widetilde{Y}_j$ using $H_1$. Define $\mathcal{A}_i:= \{j: H_2(\widetilde{X}_j)=i\}$ and $\mathcal{B}_i:= \{j: H_2(\widetilde{Y}_j)=i\}$. For each $i$ we can rewrite $N_i$ and $M_i$ as
\begin{align}\label{NiMi}
N_i=\sum_{j=1}^{L_X} \mathbbm{1}_{\mathcal{A}_i}(j)N_j', \hspace{2mm} M_i=\sum_{j=1}^{L_Y} \mathbbm{1}_{\mathcal{B}_i}(j)M_j'.
\end{align}

Thus, 

\begin{align}
\mathbb{B}_H &\leq \sum_{i,j\in \mathcal{F}} P\of{E_{ij}^{> 1}} \E{\mathbbm{1}_{E_{ij}}\omega_{i}\omega'_{j}\widetilde{g}\of{\omega_{ij}}\middle\vert E_{ij}^{> 1}}\nonumber\\
&= \sum_{i,j\in \mathcal{F}} P\of{E_{ij}^{> 1}}\of{P\of{E_{ij}\vert E_{ij}^{> 1}} \E{\omega_{i}\omega'_{j}\widetilde{g}\of{\omega_{ij}}\middle\vert E_{ij}^{> 1},E_{ij}}+
P\of{\overline{E_{ij}}\vert E_{ij}^{> 1}} \E{\omega_{i}\omega'_{j}\widetilde{g}\of{\omega_{ij}}\middle\vert E_{ij}^{> 1},\overline{E_{ij}}}} \nonumber\\
&= \sum_{i,j\in \mathcal{F}} P\of{E_{ij}}P\of{E_{ij}^{> 1}\vert E_{ij}} \E{\omega_{i}\omega'_{j}\widetilde{g}\of{\omega_{ij}}\middle\vert E_{ij}^{> 1},E_{ij}} \label{B_H_3rd}\\
&\leq O\of{\frac{U }{\epsilon^dN}} \sum_{i,j\in \mathcal{F}} P\of{E_{ij}} \E{\omega_{i}\omega'_{j}\middle\vert E_{ij}^{>1},E_{ij}}\label{B_H_4th}\\
&= O\of{\frac{U }{\epsilon^dN^3}} \sum_{i,j\in \mathcal{F}} P\of{E_{ij}} \E{N_{i}M_{j}\middle\vert E_{ij}^{>1},E_{ij}}\nonumber\\
&= O\of{\frac{U }{\epsilon^dN^3}} \sum_{\widetilde{\mathbf{x}},\widetilde{\mathbf{y}}} p_{\widetilde{\mathbf{X}},\widetilde{\mathbf{Y}}}\of{\widetilde{\mathbf{x}},\widetilde{\mathbf{y}}} \sum_{i,j\in \mathcal{F}} P\of{E_{ij}} \E{N_{i}M_{j}\middle\vert E_{ij}^{>1},E_{ij}, \widetilde{\mathbf{X}}=\widetilde{\mathbf{x}},\widetilde{\mathbf{Y}}=\widetilde{\mathbf{y}}}\nonumber\\
&= O\of{\frac{U }{\epsilon^dN^3}} \sum_{\widetilde{\mathbf{x}},\widetilde{\mathbf{y}}}p_{\widetilde{\mathbf{X}},\widetilde{\mathbf{Y}}} \sum_{i,j\in \mathcal{F}} P\of{E_{ij}} \E{\of{\sum_{r=1}^{L_X}\mathbbm{1}_{\mathcal{A}_i}(r)N_r'} \of{\sum_{s=1}^{L_Y}\mathbbm{1}_{\mathcal{B}_j}(s)M_s'}\middle\vert E_{ij}^{>1},E_{ij}, \widetilde{\mathbf{X}}=\widetilde{\mathbf{x}},\widetilde{\mathbf{Y}}=\widetilde{\mathbf{y}}}\label{B_H_7th}\\
&= O\of{\frac{U }{\epsilon^dN^3}} \sum_{\widetilde{\mathbf{x}},\widetilde{\mathbf{y}}} p_{\widetilde{\mathbf{X}},\widetilde{\mathbf{Y}}}\sum_{i,j\in \mathcal{F}} P\of{E_{ij}} \sum_{r=1}^{L_X}\sum_{s=1}^{L_Y}\E{\of{\mathbbm{1}_{\mathcal{A}_i}(r)} \of{\mathbbm{1}_{\mathcal{B}_j}(s)}\middle\vert E_{ij}^{>1},E_{ij}, \widetilde{\mathbf{X}}=\widetilde{\mathbf{x}},\widetilde{\mathbf{Y}}=\widetilde{\mathbf{y}}}\E{N_r'M_s'\vert E_{ij}, \widetilde{\mathbf{X}}=\widetilde{\mathbf{x}},\widetilde{\mathbf{Y}}=\widetilde{\mathbf{y}}}\nonumber\\
&= O\of{\frac{U }{\epsilon^dN^3}} \sum_{\widetilde{\mathbf{x}},\widetilde{\mathbf{y}}} p_{\widetilde{\mathbf{X}},\widetilde{\mathbf{Y}}}\sum_{i,j\in \mathcal{F}} P\of{E_{ij}} \sum_{r=1}^{L_X}\sum_{s=1}^{L_Y}P\of{r\in \mathcal{A}_i, s\in \mathcal{B}_j\big\vert E_{ij}^{>1},E_{ij}, \widetilde{\mathbf{X}}=\widetilde{\mathbf{x}},\widetilde{\mathbf{Y}}=\widetilde{\mathbf{y}}}\E{N_r'M_s'\vert E_{ij}, \widetilde{\mathbf{X}}=\widetilde{\mathbf{x}},\widetilde{\mathbf{Y}}=\widetilde{\mathbf{y}}}\label{B_H_1},
\end{align}
where in \eqref{B_H_3rd} we have used the Bayes rule, and the fact that $\widetilde{g}\of{\omega_{ij}}=0$ conditioned on the event $\overline{E_{ij}}$. In \eqref{B_H_4th} we have used the bound in Lemma \ref{X_ij}, and the upper bound on $\widetilde{g}(\omega_{ij})$ . Equation \eqref{B_H_7th} is due to \eqref{NiMi}.
Now we simplify $P\of{r\in \mathcal{A}_i, s\in \mathcal{B}_j\big\vert E_{ij}^{>1},E_{ij}, \widetilde{\mathbf{X}}=\widetilde{\mathbf{x}},\widetilde{\mathbf{Y}}=\widetilde{\mathbf{y}}}$ in \eqref{B_H_1} as follows. First assume that $\tilde{X}_r\neq\tilde{Y}_s$.

\begin{align}\label{B_H_2}
P\of{r\in \mathcal{A}_i, s\in \mathcal{B}_j\big\vert E_{ij}^{>1},E_{ij}, \widetilde{\mathbf{X}}=\widetilde{\mathbf{x}},\widetilde{\mathbf{Y}}=\widetilde{\mathbf{y}}}&\leq P\of{r\in \mathcal{A}_i, s\in \mathcal{B}_j\big\vert E_{v_i}^{>1}, E_{u_j}^{>1}, \widetilde{\mathbf{X}}=\widetilde{\mathbf{x}},\widetilde{\mathbf{Y}}=\widetilde{\mathbf{y}}} \nonumber\\
&= P\of{r\in \mathcal{A}_i\big\vert E_{v_i}^{>1}, \widetilde{\mathbf{X}}=\widetilde{\mathbf{x}}} P\of{s\in \mathcal{B}_j\big\vert E_{u_j}^{>1},\widetilde{\mathbf{Y}}=\widetilde{\mathbf{y}}}, 
\end{align}
where the second line is because the hash function $H_2$ is random and independent for different inputs.
$ P\of{r\in \mathcal{A}_i\big\vert E_{v_i}^{>1},\widetilde{\mathbf{X}}=\widetilde{\mathbf{x}}}$ in \eqref{B_H_2} can be written as
\begin{align}\label{B_H_3}
 P\of{r\in \mathcal{A}_i\big\vert E_{v_i}^{>1},\widetilde{\mathbf{X}}=\widetilde{\mathbf{x}}}&=\frac{P\of{r\in \mathcal{A}_i,E_{v_i}^{>1}\big\vert \widetilde{\mathbf{X}}=\widetilde{\mathbf{x}}}}{P\of{E_{v_i}^{>1}\big\vert \widetilde{\mathbf{X}}=\widetilde{\mathbf{x}}}}.
\end{align}
We first find $P\of{E_{v_i}^{>1}\big\vert \widetilde{\mathbf{X}}=\widetilde{\mathbf{x}}}$:

\begin{align}\label{B_H_4}
P\of{E_{v_i}^{>1}\big\vert \widetilde{\mathbf{X}}=
\widetilde{\mathbf{x}}}&=1-P\of{E_{v_i}^{=0}\big\vert \widetilde{\mathbf{X}}=
\widetilde{\mathbf{x}}}-P\of{E_{v_i}^{=1}\big\vert \widetilde{\mathbf{X}}=
\widetilde{\mathbf{x}}}\nonumber\\
&= 1-\of{\frac{F-1}{F}}^{L_X}-\of{\frac{L_X}{F}\of{\frac{F-1}{F}}^{L_x-1}}\nonumber\\
&=\frac{L_X^2}{2F^2}+o\of{\frac{L_X^2}{2F^2}}.
\end{align}
Next, we find $P\of{r\in \mathcal{A}_i,E_{v_i}^{>1}\big\vert \widetilde{\mathbf{X}}=\widetilde{\mathbf{x}}}$ in \eqref{B_H_3} as follows.
\begin{align}\label{B_H_5}
P\of{r\in \mathcal{A}_i,E_{v_i}^{>1}\big\vert \widetilde{\mathbf{X}}=\widetilde{\mathbf{x}}}
&=P\of{E_{v_i}^{>1}\big\vert r\in \mathcal{A}_i,\widetilde{\mathbf{X}}=\widetilde{\mathbf{x}}}
P\of{r\in \mathcal{A}_i\big\vert \widetilde{\mathbf{X}}=\widetilde{\mathbf{x}}}\nonumber\\
&=\of{1-\of{\frac{F-1}{F}}^{L_X-1}}\of{\frac{1}{F}}=O\of{\frac{L_X}{F^2}}
\end{align}
Thus, using \eqref{B_H_4} and \eqref{B_H_5} yields 

\begin{align}\label{B_H_6}
 P\of{r\in \mathcal{A}_i\big\vert E_{v_i}^{>1},\widetilde{\mathbf{X}}=\widetilde{\mathbf{x}}}
=O\of{\frac{1}{L_X}}.
\end{align}
Similarly, we have  
\begin{align}\label{B_H_7}
 P\of{s\in \mathcal{B}_j\big\vert E_{u_j}^{>1},\widetilde{\mathbf{Y}}=\widetilde{\mathbf{y}}}
=O\of{\frac{1}{L_Y}}.
\end{align}

Now assume the case $\tilde{X}_r=\tilde{Y}_s$. Then since $H_2(\tilde{X}_r)=H_2(\tilde{Y}_s)$, we can simplify $P\of{r\in \mathcal{A}_i, s\in \mathcal{B}_j\big\vert E_{ij}^{>1},E_{ij}, \widetilde{\mathbf{X}}=\widetilde{\mathbf{x}},\widetilde{\mathbf{Y}}=\widetilde{\mathbf{y}}}$ in \eqref{B_H_1} as

\begin{align}
P\of{r\in \mathcal{A}_i, s\in \mathcal{B}_j\big\vert E_{ij}^{>1},E_{ij}, \widetilde{\mathbf{X}}=\widetilde{\mathbf{x}},\widetilde{\mathbf{Y}}=\widetilde{\mathbf{y}}}=\delta_{ij} P\of{r\in \mathcal{A}_i\big\vert E_{v_i}^{>1}, \widetilde{\mathbf{X}}=\widetilde{\mathbf{x}},\widetilde{\mathbf{Y}}=\widetilde{\mathbf{y}}}.
\end{align}
Recalling the definition $\widetilde{\mathbf{z}}:=\widetilde{\mathbf{x}}\cup \widetilde{\mathbf{y}}$ and $L_Z:=|\widetilde{\mathbf{z}}|$, similar to  

\begin{align}\label{B_H_6:2}
P\of{r\in \mathcal{A}_i\big\vert E_{v_i}^{>1}, \widetilde{\mathbf{X}}=\widetilde{\mathbf{x}},\widetilde{\mathbf{Y}}=\widetilde{\mathbf{y}}}
=O\of{\frac{1}{L_Z}}.
\end{align}

By using equations \eqref{B_H_2}, \eqref{B_H_6}, \eqref{B_H_7} and \eqref{B_H_6:2} in \eqref{B_H_1}, we can write the following upper bound for the bias estimator due to collision. 

\begin{align}\label{Estimator_Bias_Collision}
\mathbb{B}_H &\leq O\of{\frac{U }{\epsilon^dN^3}} \sum_{\widetilde{\mathbf{x}},\widetilde{\mathbf{y}}} p_{\widetilde{\mathbf{X}},\widetilde{\mathbf{Y}}}\sum_{i,j\in \mathcal{F}} P\of{E_{ij}}  \sum_{r=1}^{L_X}\sum_{s=1}^{L_Y}\E{N_r'M_s'\vert E_{ij}, \widetilde{\mathbf{X}}=\widetilde{\mathbf{x}},\widetilde{\mathbf{Y}}=\widetilde{\mathbf{y}}}\of{O\of{\frac{1}{L_XL_Y}}+\delta_{ij}O\of{\frac{1}{L_Z}}}\nonumber\\
&= O\of{\frac{U }{\epsilon^dN^3}} \sum_{\widetilde{\mathbf{x}},\widetilde{\mathbf{y}}} p_{\widetilde{\mathbf{X}},\widetilde{\mathbf{Y}}}\sum_{i,j\in \mathcal{F}} P\of{E_{ij}}  \E{\sum_{r=1}^{L_X}N_r'\sum_{s=1}^{L_Y}M_s'\vert E_{ij}, \widetilde{\mathbf{X}}=\widetilde{\mathbf{x}},\widetilde{\mathbf{Y}}=\widetilde{\mathbf{y}}}\of{O\of{\frac{1}{L_XL_Y}}+\delta_{ij}O\of{\frac{1}{L_Z}}}\nonumber\\
&= O\of{\frac{U }{\epsilon^dN^3}} \sum_{\widetilde{\mathbf{x}},\widetilde{\mathbf{y}}} p_{\widetilde{\mathbf{X}},\widetilde{\mathbf{Y}}}\sum_{i,j\in \mathcal{F}} P\of{E_{ij}} N^2\of{O\of{\frac{1}{L_XL_Y}}+\delta_{ij}O\of{\frac{1}{L_Z}}}\nonumber\\
&= O\of{\frac{U }{\epsilon^dN^3}} \sum_{\widetilde{\mathbf{x}},\widetilde{\mathbf{y}}} p_{\widetilde{\mathbf{X}},\widetilde{\mathbf{Y}}}\of{O\of{\frac{N^2}{L_XL_Y}}+O\of{\frac{N}{L_Z}}}\sum_{i,j\in \mathcal{F}} P\of{E_{ij}} \nonumber\\
&= O\of{\frac{U }{\epsilon^dN^3}} \sum_{\widetilde{\mathbf{x}},\widetilde{\mathbf{y}}} p_{\widetilde{\mathbf{X}},\widetilde{\mathbf{Y}}}\of{O\of{\frac{N^2}{L_XL_Y}}+O\of{\frac{N}{L_Z}}}\E{\sum_{i,j\in \mathcal{F}} \mathbbm{1}_{E_{ij}}} \nonumber\\
&\leq O\of{\frac{U }{\epsilon^dN^3}} \sum_{\widetilde{\mathbf{x}},\widetilde{\mathbf{y}}} p_{\widetilde{\mathbf{X}},\widetilde{\mathbf{Y}}}\of{O\of{\frac{N^2}{L_XL_Y}}+O\of{\frac{N}{L_Z}}}\of{L_XL_Y}\nonumber\\
&\leq O\of{\frac{1}{\epsilon^dN}}.
\end{align}
\end{proof}

\subsection{Bias without Collision}

A key idea in proving Theorem \ref{bias_theorem} is to show that the expectation of the edge weights $\omega_{ij}$ are proportional to the Radon-Nikodym derivative $dP_{XY}/dP_XP_Y$ at the points that correspond to the vertices $v_i$ and $u_j$. This fact is stated in the following lemma:

\begin{lemma}\label{E_wij_lemma_main}
Under the assumptions \textbf{A1-A4}, and assuming that the density functions in \textbf{A3} have bounded derivatives up to order $q\geq 0$ we have:
\begin{align}
\E{\omega_{ij}}=\frac{dP_{XY}}{dP_XP_Y}+\mathbb{B}(N,\epsilon,q,\gamma),
\end{align}
where
\begin{align}
\mathbb{B}(N,\epsilon,q,\gamma):= 
     \begin{cases}
       O\of{\epsilon^\gamma}+O\of{\frac{1}{N\epsilon^d}}, &\quad q=0\vspace{2mm}\\
       \sum_{i=1}^q C_i\epsilon^{i}+O\of{\epsilon^q}+O\of{\frac{1}{N\epsilon^d}}, &\quad q\geq 1,
     \end{cases}
\end{align}
and $C_i$ are real constants.
\end{lemma}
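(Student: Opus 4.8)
The plan is to reduce the analysis of $\E{\omega_{ij}}$ from the composite hash $H$ to the geometric hash $H_1$ alone, and then to split the error into a deterministic density-approximation part (producing the $\epsilon$-terms) and a stochastic finite-sample part (producing the $O(1/(N\epsilon^d))$ term). First I would condition on the no-collision event $E_{ij}^{=1}$: by Lemma \ref{X_ij} we have $P(E_{ij}^{=1}\mid E_{ij})=1-O(1/(\epsilon^d N))$, and on $E_{ij}^{=1}$ exactly one $H_1$-bucket maps into each of $v_i$ and $u_j$, so $N_i,M_j,N_{ij}$ coincide with the corresponding $H_1$-bucket counts $N'_{i'},M'_{j'},N'_{i'j'}$. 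Splitting $\E{\omega_{ij}}$ over $E_{ij}^{=1}$ and its complement, the complementary (collision) branch contributes an additive $O(1/(\epsilon^d N))$ term by the same count-based argument used in Lemma \ref{collision_lemma}, so it suffices to analyze $\E{\omega_{ij}\mid E_{ij}^{=1}}$, for which the relevant buckets are axis-aligned cubes of side $\epsilon$.

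Next I would identify the leading behaviour. Writing $p_{i'j'}:=P_{XY}(C_{i'}\times C_{j'})$, $p_{i'}:=P_X(C_{i'})$ and $q_{j'}:=P_Y(C_{j'})$ for the relevant cubes $C_{i'},C_{j'}$, the three counts are (dependent) binomials with means $Np_{i'j'}$, $Np_{i'}$, $Nq_{j'}$, so the leading term of $\E{\omega_{ij}\mid E_{ij}^{=1}}$ is the mass ratio $p_{i'j'}/(p_{i'}q_{j'})$. For the deterministic error I would localize on a cube of side $\epsilon$: \textbf{A3} together with Definition \ref{Holder} gives $p_{i'j'}=f_{X_CY_C}(x,y)\,\epsilon^{d}\,(1+O(\epsilon^\gamma))$ and analogous expansions for the marginals, where the discrete components match exactly because for small $\epsilon$ every point in a non-empty cube shares the same discrete coordinate (this is why \textbf{A3} lists the conditional continuous densities). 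Dividing, the volume factors $\epsilon^{d}=\epsilon^{d_X}\epsilon^{d_Y}$ cancel and, since the ratio is bounded by \textbf{A2}, $p_{i'j'}/(p_{i'}q_{j'})=\frac{dP_{XY}}{dP_XP_Y}+O(\epsilon^\gamma)$, which is the $q=0$ case. When the densities have bounded derivatives up to order $q\geq 1$, I would replace the Hölder bound by a Taylor expansion of each density about the cube representative and average over the uniform offset $b\in[0,\epsilon]$; the surviving terms form the polynomial $\sum_{i=1}^q C_i\epsilon^{i}$ with remainder $O(\epsilon^q)$, matching $\mathbb{B}(N,\epsilon,q,\gamma)$.

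For the stochastic part I would control the bias of the nonlinear ratio, i.e. the gap between $\E{N'_{i'j'}N/(N'_{i'}M'_{j'})}$ and $p_{i'j'}/(p_{i'}q_{j'})$. A second-order (delta-method) expansion of $(a,b,c)\mapsto a/(bc)$ about the means $(Np_{i'j'},Np_{i'},Nq_{j'})$ shows the correction is governed by the variances and covariances of the counts divided by products of their means; since each bucket has mean count of order $N\epsilon^{d}$ and binomial variance of the same order, every such term is $O(1/(N\epsilon^d))$. Combining this with the collision contribution from the first step and the density-approximation error from the second step yields the claimed form of $\mathbb{B}(N,\epsilon,q,\gamma)$.

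The main obstacle is this stochastic ratio-bias step: the counts $N'_{i'},M'_{j'},N'_{i'j'}$ are strongly dependent (indeed $N'_{i'j'}\le\min(N'_{i'},M'_{j'})$), and the denominator $N'_{i'}M'_{j'}$ can be small, so a naive Taylor expansion of the ratio is not automatically justified. Making this rigorous requires bounding the moments of $1/N'_{i'}$ and $1/M'_{j'}$, which I would do by noting that conditioning on the edge $E_{ij}$ forces $N'_{i'},M'_{j'}\ge 1$ and that, by a Chernoff bound, the counts concentrate around their means of order $N\epsilon^d$; the low-count event then contributes negligibly and the delta-method remainder is genuinely $O(1/(N\epsilon^d))$.
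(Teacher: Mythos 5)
Your proposal follows the same overall decomposition as the paper: split on the $H_2$-collision event (the paper conditions on $E_{ij}^{\leq 1}$ where you use $E_{ij}^{=1}$, and bounds the collision branch exactly as you suggest, by reusing the argument of Lemma \ref{collision_lemma} and the estimate $P(E_{ij}^{\leq 1})=1-O(1/(\epsilon^d N))$), reduce to $H_1$-bucket counts, and obtain the $\epsilon$-polynomial by H\"{o}lder continuity for $q=0$ and Taylor expansion for $q\geq 1$ of the conditional densities over an $\epsilon$-cube, with the discrete components matched exactly (this is precisely the paper's Lemma \ref{P_r_lemma} feeding into Lemma \ref{E_wij_lemma_app}). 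Where you genuinely diverge is the stochastic step of passing the expectation through the ratio $N'_{i'j'}N/(N'_{i'}M'_{j'})$. The paper does this with a first-order Lipschitz argument (Lemma \ref{E_pass_lemma}) combined with Efron--Stein variance bounds on $\omega_i$, $\omega'_j$, $\nu_{ij}$ (Lemma \ref{variance_1}), which costs an additive $O(1/\sqrt{N})$; you instead propose a second-order delta-method expansion of $(a,b,c)\mapsto a/(bc)$ about the mean counts together with Chernoff concentration, which costs $O(1/(N\epsilon^d))$. Your route is arguably the sharper one and matches the stated form of $\mathbb{B}(N,\epsilon,q,\gamma)$ directly: the paper's own derivation carries an $O(1/\sqrt{N})$ remainder that does not appear in the lemma statement and is harmless only in the ensemble regime $\epsilon\sim N^{-1/2d}$, where $1/\sqrt{N}=O(1/(N\epsilon^d))$. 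The obstacle you flag, that the ratio is not globally Lipschitz and the denominators can be small so that neither a naive Taylor expansion nor a naive Lipschitz bound is automatically justified, is real; it is in fact also present, and unaddressed, in the paper's application of Lemma \ref{E_pass_lemma} to $\nu_{ij}/(\omega_i\omega'_j)$. Your plan of conditioning on $E_{ij}$ to force the counts to be at least one and using concentration to discard the low-count event is the right way to close this, though in your write-up it remains a plan rather than a completed argument.
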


Note that since $\omega_{ij}=N_{ij}N/N_iM_j$, and $N_{ij}$, $N_i$ and $N_j$ are not independent variables, deriving the expectation is not trivial.
 In the following we give a lemma that provides conditions under which the expectation of a function of random variables is close to the function of expectations of the random variables. We will use the following lemma to simplify $\E{\omega_{ij}}$.
\begin{lemma}\label{E_pass_lemma}
Assume that $g(Z_1,Z_2,...,Z_k): \mathcal{Z}_1\times...\times \mathcal{Z}_k \to R$ is a Lipschitz continuous function with constant $H_g> 0$ with respect to each of variables $Z_i, 1\leq i\leq k$. Let   $\mathbb{V}\of[Z_i]$ and $\mathbb{V}\of[Z_i|X]$ respectively denote the variance and the conditional variance of each variable $Z_i$ for a given variable $X$. Then we have

\begin{align}
&\textbf{a) }\abs{\E{g\of{Z_1,Z_2,...,Z_k}}-g\of{\E{Z_1},\E{Z_2},...,\E{Z_k}}}\leq H_g\sum_{i=1}^k\sqrt{\mathbb{V}\of[Z_i]},\\
&\textbf{b) }\abs{\E{g\of{Z_1,Z_2,...,Z_k}|X}-g\of{\E{Z_1|X},\E{Z_2|X},...,\E{Z_k|X}}}\leq H_g\sum_{i=1}^k\sqrt{\mathbb{V}\of[Z_i|X]}.
\end{align}
\end{lemma}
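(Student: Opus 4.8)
The plan is to prove both bounds by a single argument that controls the deviation of $g$ from its value at the mean via a first-order Taylor-type estimate combined with the Cauchy--Schwarz inequality; part (b) is literally part (a) applied to the conditional measure $P(\cdot\mid X)$, so I would prove (a) in full and then remark that replacing every expectation by its conditional counterpart and every variance by its conditional variance yields (b) verbatim. The key structural fact I would exploit is that Lipschitz continuity in each coordinate lets me ``swap'' the arguments one at a time along a telescoping chain, so that the multivariate problem reduces to $k$ univariate estimates that I can sum.

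For part (a), first I would set $\mu_i := \E{Z_i}$ and introduce the hybrid points
\begin{align}
W_m := g\of{Z_1,\dots,Z_m,\mu_{m+1},\dots,\mu_k},
\end{align}
for $0\le m\le k$, so that $W_k = g(Z_1,\dots,Z_k)$ and $W_0 = g(\mu_1,\dots,\mu_k)$. Then I would write the telescoping identity $W_k - W_0 = \sum_{m=1}^{k}(W_m - W_{m-1})$, take expectations, and apply the triangle inequality to reduce the target to bounding each $\abs{\E{W_m - W_{m-1}}}$. Since $W_m$ and $W_{m-1}$ differ only in the $m$th argument ($Z_m$ versus $\mu_m$), the coordinatewise Lipschitz hypothesis gives the pointwise bound $\abs{W_m - W_{m-1}}\le H_g\abs{Z_m - \mu_m}$, hence $\abs{\E{W_m - W_{m-1}}}\le H_g\,\E{\abs{Z_m - \mu_m}}$. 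Finally I would invoke Jensen (or Cauchy--Schwarz against the constant $1$) to replace the first absolute moment by the standard deviation, $\E{\abs{Z_m - \mu_m}}\le \sqrt{\mathbb{V}\of[Z_m]}$, and sum over $m$ to obtain the claimed bound $H_g\sum_{i=1}^{k}\sqrt{\mathbb{V}\of[Z_i]}$.

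The only genuine subtlety — and what I expect to be the main obstacle — is the precise meaning of ``Lipschitz with respect to each variable.'' The telescoping step requires the coordinatewise bound $\abs{g(\dots,Z_m,\dots)-g(\dots,\mu_m,\dots)}\le H_g\abs{Z_m-\mu_m}$ to hold uniformly when the remaining coordinates are frozen at a mixture of sample values and means; I would state explicitly that the coordinatewise Lipschitz property is assumed to hold with the same constant $H_g$ for all fixed values of the other arguments, which is exactly the hypothesis as written, so the estimate is legitimate. For part (b) I would condition on $X$ throughout, replacing $\mu_i$ by $\E{Z_i\mid X}$ and $\mathbb{V}\of[Z_i]$ by $\mathbb{V}\of[Z_i\mid X]$; because every inequality used (triangle inequality, coordinatewise Lipschitz bound, conditional Jensen) holds verbatim for the conditional expectation $\E{\cdot\mid X}$, the identical chain of estimates delivers the conditional bound, completing the proof.
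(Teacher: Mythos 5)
Your proposal is correct and follows essentially the same route as the paper's proof: the same telescoping chain of hybrid points that swaps one coordinate at a time for its (conditional) expectation, the coordinatewise Lipschitz bound on each increment, and Cauchy--Schwarz/Jensen to convert the first absolute moment into the standard deviation, with part (b) obtained by running the identical argument under $\E{\cdot\mid X}$. No gaps.
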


\begin{proof}

\begin{align}
\abs{\E{g\of{Z_1,Z_2,...,Z_k}}-g\of{\E{Z_1},\E{Z_2},...,\E{Z_k}}} &= \abs{\E{g\of{Z_1,Z_2,...,Z_k}-g\of{\E{Z_1},\E{Z_2},...,\E{Z_k}}}}\nonumber\\
&\leq\E{\abs{g\of{Z_1,Z_2,...,Z_k}-
g\of{\E{Z_1},\E{Z_2},...,\E{Z_k}}}}\label{E_pass_tri_1}\\
&\leq\mathbb{E}[\vert g\of{Z_1,Z_2,...,Z_k}-g\of{\E{Z_1},Z_2,...,Z_k}+\nonumber\\ 
&\quad+g\of{\E{Z_1},Z_2,...,Z_k}-g\of{\E{Z_1},\E{Z_2},...,Z_k}\nonumber\\
&\quad+...\nonumber\\
&\quad+g\of{\E{Z_1},\E{Z_2},...,\E{Z_{k-1}},Z_k}-g\of{\E{Z_1},\E{Z_2},...,\E{Z_k}}|]\nonumber\\
&\leq\E{\bigg\vert g\of{Z_1,Z_2,...,Z_k}-g\of{\E{Z_1},Z_2,...,Z_k}\bigg\vert}\nonumber\\ 
&\quad+\E{\bigg\vert g\of{\E{Z_1},Z_2,...,Z_k}-g\of{\E{Z_1},\E{Z_2},...,Z_k}\bigg\vert}\nonumber\\
&\quad+...\nonumber\\
&\quad+\E{\bigg\vert g\of{\E{Z_1},...,\E{Z_{k-1}},Z_k}-g\of{\E{Z_1},...,\E{Z_k}}\bigg\vert}\label{E_pass_tri_2}\\
&\leq H_g\E{\abs{Z_1-\E{Z_1}} }+H_g\E{\abs{Z_2-\E{Z_2}} }+...+H_g\E{\abs{Z_k-\E{Z_k}} }\label{E_pass_Lip}\\
&\leq H_g\sum_{i=1}^k\sqrt{\mathbb{V}\of[Z_i]}\label{E_pass_Cauchy}.
\end{align}
In \eqref{E_pass_tri_1} and \eqref{E_pass_tri_2} we have used triangle inequalities. In \eqref{E_pass_Lip} we have applied Lipschitz condition, and finally in \eqref{E_pass_Cauchy} we have used Cauchy\hyp Schwarz inequality. Since the proofs of parts (a) and (b) are similar, we omit the proof of part (b).
\end{proof}
\begin{lemma}
Define $\nu_{ij}=N_{ij}/N$, and recall the definitions $\omega_{ij}=N_{ij}N/N_iN_j$, $\omega_{i}=N_{i}/N$, and $\omega_{j}'=N_{j}/N$. Then we can write
\begin{align}
\E{\omega_{ij}}=\frac{\E{\nu_{ij}}}{\E{\omega_{i}}\E{\omega_{j}'}}+O\of{\sqrt{\frac{1}{N}}}
\end{align}
\end{lemma}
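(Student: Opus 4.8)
The plan is to reduce the statement to a single application of Lemma~\ref{E_pass_lemma}. First I would record the algebraic identity that rewrites the edge weight as a ratio of the three empirical quantities appearing in the lemma. Since $\omega_{ij}=N_{ij}N/(N_iN_j)$, $\omega_i=N_i/N$, $\omega_j'=N_j/N$ and $\nu_{ij}=N_{ij}/N$, the factors of $N$ cancel to give
\begin{equation*}
\omega_{ij}=\frac{\nu_{ij}}{\omega_i\,\omega_j'}=g(\nu_{ij},\omega_i,\omega_j'),\qquad g(z_1,z_2,z_3):=\frac{z_1}{z_2z_3}.
\end{equation*}
The claim is then exactly the assertion that the expectation may be moved inside $g$ at the cost of an $O(\sqrt{1/N})$ error. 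It is worth noting that no independence among $\nu_{ij},\omega_i,\omega_j'$ is required, which is essential here because these three variables are strongly dependent.

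Next I would invoke Lemma~\ref{E_pass_lemma}(a) with $k=3$ and the above $g$, which gives
\begin{equation*}
\left|\E{\omega_{ij}}-\frac{\E{\nu_{ij}}}{\E{\omega_i}\E{\omega_j'}}\right|\le H_g\left(\sqrt{\mathbb{V}\!\left[\nu_{ij}\right]}+\sqrt{\mathbb{V}\!\left[\omega_i\right]}+\sqrt{\mathbb{V}\!\left[\omega_j'\right]}\right).
\end{equation*}
Two ingredients remain: a Lipschitz constant $H_g$ for $g$ in each coordinate, and an $O(1/N)$ bound on each of the three variances. For the variances I would observe that $N_i$, $N_j$ and $N_{ij}$ are each sums of $N$ i.i.d.\ Bernoulli indicators recording whether a sample (or a sample pair) lands in a fixed bucket; hence $\nu_{ij},\omega_i,\omega_j'$ are empirical means of bounded i.i.d.\ variables, so each variance is of order $1/N$. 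This is consistent with, and can be quoted from, Theorem~\ref{variance}.

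The step I expect to be the real obstacle is the Lipschitz bound, because $g(z_1,z_2,z_3)=z_1/(z_2z_3)$ is not globally Lipschitz: the coordinates $z_2=\omega_i$ and $z_3=\omega_j'$ sit in the denominator and may be small. To control this I would restrict to the event that the buckets $v_i$ and $u_j$ are occupied (the only configuration contributing to \eqref{est_def}) and use the concentration of $\omega_i$ and $\omega_j'$ about their means, which are of order $\epsilon^{d_X}$ and $\epsilon^{d_Y}$ respectively; on this high-probability domain the partial derivatives $1/(z_2z_3)$ and $z_1/(z_2^2z_3)$ of $g$ are bounded, so a finite $H_g$ exists. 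Combining this finite $H_g$ with the $O(1/N)$ variance bounds collapses the right-hand side to $O(\sqrt{1/N})$ and closes the argument. The only caveat is that $H_g$ carries an $\epsilon$-dependent prefactor; I would keep track of it and absorb it into the constants used in the subsequent bias computation rather than claiming an $\epsilon$-uniform rate.
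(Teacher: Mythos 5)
Your proof takes exactly the route the paper does: the paper's entire argument for this lemma is a one-line appeal to Lemma~\ref{E_pass_lemma} combined with the $O(1/N)$ variance bounds on $\nu_{ij}$, $\omega_i$, $\omega_j'$ from Lemma~\ref{variance_1}, which is precisely your reduction via $\omega_{ij}=\nu_{ij}/(\omega_i\omega_j')$. Your closing discussion of the Lipschitz constant is in fact more careful than the paper, which silently treats $g(z_1,z_2,z_3)=z_1/(z_2z_3)$ as Lipschitz even though its effective constant on the relevant domain scales like $\epsilon^{-d}$ (since $\E{\omega_i}$ and $\E{\omega_j'}$ are of order $\epsilon^{d_C}$ and $\epsilon^{d_C'}$), a dependence the paper's $O(\sqrt{1/N})$ term does not record.
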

\begin{proof}
The proof follows by Lemma \ref{E_pass_lemma} and the fact that $\mathbb{V}\of[\omega_{ij}]\leq O\of{1/N}$ (proved in Lemma \ref{variance_1}).
\end{proof}

Let $x_D$ and $x_C$ respectively denote the discrete and continuous components of the vector $x$, with dimensions $d_D$ and $d_C$. Also let $f_{X_C}(x_C)$ and $p_{X_D}(x_D)$ respectively denote density and pmf functions of these components associated with the  probability measure $P_X$. Let $S(x,r)$ be the set of all points that are within the distance $r/2$ of $x$ in each dimension $i$, i.e. 
\begin{align}
S(x,r):\{x|\forall i\leq d,|X_i-x_i|<r/2\}.
\end{align}

Denote $P_r(x):=P(x\in S(x,r))$. Then we have the following lemma.

\begin{lemma}\label{P_r_lemma}
Let $r<s_{\mathcal{X}}$, where $s_{\mathcal{X}}$ is the smallest possible distance in the discrete components of the support set, ${\mathcal{X}}$.  Under the assumption \textbf{A3}, and assuming that the density functions in \textbf{A3} have bounded derivatives up to the order $q\geq 0$, we have
\begin{align}
P_r(x)=P(X_D=x_D)r^{d_C}\of{f(x_C|x_D)+\mu(r,\gamma,q,\mathbf{C}_X)},
\end{align}
where 
\begin{align}\label{mu}
\mu(r,\gamma,q,\mathbf{C}_X):= 
     \begin{cases}
       O\of{r^\gamma}, &\quad q=0\vspace{2mm}\\
       \sum_{i=1}^q C_i r^{i}+O\of{r^q}, &\quad q\geq 1.
     \end{cases}
\end{align}

In the above equation, $\mathbf{C}_X:=(C_1,C_2,...,C_q)$, and $C_i$ are real constants depending on the probability measure $P_X$.

\end{lemma}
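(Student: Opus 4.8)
The plan is to factor $P_r(x)$ into a discrete part and a continuous part, reduce the continuous part to an integral of the conditional density over a small cube, and then expand that integral in powers of $r$ using the smoothness of $f$.

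First I would exploit the hypothesis $r < s_{\mathcal{X}}$. Writing $S(x,r)$ as the product of its restriction to the discrete coordinates and its restriction $S_C(x_C,r)$ to the continuous coordinates, the condition $r < s_{\mathcal{X}}$ guarantees that no discrete value other than $x_D$ lies within distance $r/2$ of $x_D$ in every discrete coordinate. Hence the discrete constraint defining $S(x,r)$ is exactly the event $X_D = x_D$, and
$$P_r(x) = P(X_D = x_D)\, P(X_C \in S_C(x_C,r) \mid X_D = x_D) = P(X_D = x_D)\int_{S_C(x_C,r)} f(x_C'\mid x_D)\,dx_C',$$
where $S_C(x_C,r)$ is the $d_C$-dimensional cube of side $r$ centered at $x_C$ and $f(\cdot\mid x_D)=f_{X_C|X_D}(\cdot\mid x_D)$ is the conditional density from assumption \textbf{A3}. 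Comparing with the target identity, everything reduces to showing $\int_{S_C(x_C,r)} f(x_C'\mid x_D)\,dx_C' = r^{d_C}\left(f(x_C\mid x_D) + \mu(r,\gamma,q,\mathbf{C}_X)\right)$.

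For the case $q=0$ I would use only Hölder continuity of $f(\cdot\mid x_D)$ (assumption \textbf{A3} and Definition \ref{Holder}). Writing $f(x_C'\mid x_D) = f(x_C\mid x_D) + [f(x_C'\mid x_D) - f(x_C\mid x_D)]$ and integrating, the leading term gives $r^{d_C} f(x_C\mid x_D)$, while on the cube $\|x_C'-x_C\| \le \tfrac{r}{2}\sqrt{d_C}$, so the bracketed remainder is bounded by $G_f(\tfrac{r}{2}\sqrt{d_C})^\gamma = O(r^\gamma)$; integrating and dividing out $r^{d_C}$ yields the $O(r^\gamma)$ error. For the case $q\geq 1$ I would instead Taylor expand $f(x_C'\mid x_D)$ about $x_C$ to order $q-1$, with remainder $O(\|x_C'-x_C\|^q) = O(r^q)$ controlled by the bounded $q$th-order derivatives. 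After the substitution $u = x_C'-x_C$, the integral runs over the symmetric cube $[-r/2,r/2]^{d_C}$; every monomial $u^\alpha$ with at least one odd exponent integrates to zero, and each surviving (all-even) monomial of total degree $i$ contributes a constant multiple of $r^{i+d_C}$. Factoring out $r^{d_C}$ then produces exactly $f(x_C\mid x_D) + \sum_{i=1}^q C_i r^i + O(r^q)$, with the $C_i$ for odd $i$ vanishing and the remaining $C_i$ determined by the even-order partial derivatives of $f(\cdot\mid x_D)$ at $x_C$.

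I expect the main obstacle to be the bookkeeping in the case $q\geq 1$: establishing the Taylor remainder bound uniformly over the cube from the bounded-derivative hypothesis, and then tracking precisely which monomials survive the symmetric integration so that the surviving terms assemble into the stated polynomial $\sum_{i=1}^q C_i r^i$ with real constants $C_i$ depending only on $P_X$. The discrete factoring and the $q=0$ Hölder estimate are straightforward by comparison; the delicate point is confirming that the symmetry argument indeed eliminates the odd-degree terms and that the constants are independent of $r$ and $N$.
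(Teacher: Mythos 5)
Your proposal is correct and follows essentially the same route as the paper, whose proof is a one-line appeal to the H\"{o}lder bound \eqref{Holder_eq} for $q=0$ and a Taylor expansion of $f(x_C|x_D)$ for $q\geq 1$ (deferring details to \cite{noshad_AISTAT}); your discrete/continuous factoring via $r<s_{\mathcal{X}}$ and the cube integration are exactly the implicit steps being invoked. Your extra observation that the odd-degree coefficients vanish by symmetry of the centered cube $S(x,r)$ is a harmless refinement consistent with the stated form, since the $C_i$ are only asserted to be real constants.
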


\begin{proof}
The proof is straightforward by using \eqref{Holder_eq} for the case $q=0$ (similar to (27)-(29) in \cite{noshad_AISTAT}), and using the Taylor expansion of $f(x_C|x_D)$ for the case $q\geq 1$ (similar to (36)-(37) in \cite{noshad_AISTAT}). 
\end{proof}

\begin{lemma}\label{E_wij_lemma_app}
Let $H(x)=i, H(y)=j$. Under the assumptions \textbf{A1-A3}, and assuming that the density functions in \textbf{A3} have bounded derivatives up to the order $q\geq 0$, we have

\begin{align}
\E{\omega_{ij}|E_{ij}^{\leq 1}}=\frac{dP_{XY}}{dP_XP_Y}(x,y)+\mu(\epsilon,\gamma,q,\mathbf{C}_{XY}')+O\of{\frac{1}{\sqrt{N}}},
\end{align}
where $\mu(\epsilon,\gamma,q,\mathbf{C}'_{XY})$ is defined in \eqref{mu}.
\end{lemma}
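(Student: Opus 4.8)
The plan is to exploit the no-collision conditioning to reduce the randomized $H_2$-bucket counts to the underlying $H_1$-cell counts, and thereby turn the computation of $\E{\omega_{ij}\mid E_{ij}^{\leq 1}}$ into the computation of three ordinary binomial expectations whose ratio identifies the Radon--Nikodym derivative. Conditioned on $E_{ij}^{\leq 1}$, at most one distinct output of $H_1$ is mapped into each of the buckets $v_i$ and $u_j$ by $H_2$; hence $N_i=N_i'$, $M_j=M_j'$ and $N_{ij}=N_{ij}'$, where the primed quantities count the points $X_k$ (resp.\ $Y_k$, resp.\ pairs $(X_k,Y_k)$) that fall into the fixed $H_1$-cells $S(x,\epsilon)$, $S(y,\epsilon)$ and $S(x,\epsilon)\times S(y,\epsilon)$ associated with $H(x)=i$, $H(y)=j$. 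After this conditioning, $\omega_i$, $\omega_j'$ and $\nu_{ij}$ are empirical frequencies of i.i.d.\ indicators and $\omega_{ij}=\nu_{ij}/(\omega_i\omega_j')$.

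Next I would apply the conditional form of Lemma \ref{E_pass_lemma}(b) to the map $g(\nu,\omega,\omega')=\nu/(\omega\,\omega')$, which is Lipschitz in each argument on the region where $\omega,\omega'$ are bounded away from zero. This gives
\begin{equation*}
\E{\omega_{ij}\mid E_{ij}^{\leq 1}}=\frac{\E{\nu_{ij}\mid E_{ij}^{\leq 1}}}{\E{\omega_i\mid E_{ij}^{\leq 1}}\,\E{\omega_j'\mid E_{ij}^{\leq 1}}}+O\of{\tfrac{1}{\sqrt{N}}},
\end{equation*}
where the remainder is controlled by the conditional variances $\mathbb{V}[\omega_i\mid\cdot]$, $\mathbb{V}[\omega_j'\mid\cdot]$, $\mathbb{V}[\nu_{ij}\mid\cdot]$, each of order $O(1/N)$ by (the conditional version of) Lemma \ref{variance_1}.

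I would then evaluate the three conditional expectations with Lemma \ref{P_r_lemma} at $r=\epsilon$. Since $N_i'$ is $\mathrm{Binomial}(N,P_\epsilon(x))$, one has $\E{\omega_i\mid E_{ij}^{\leq 1}}=P_\epsilon(x)=P(X_D=x_D)\,\epsilon^{d_{X_C}}\of{f_{X_C|X_D}(x_C|x_D)+\mu(\epsilon,\gamma,q,\mathbf{C}_X)}$, and analogously for $\omega_j'$ via the $Y$-marginal mass and for $\nu_{ij}$ via the joint cell mass. Substituting into the ratio, the powers $\epsilon^{d_{X_C}}$ and $\epsilon^{d_{Y_C}}$ cancel between numerator and denominator, the discrete pmf factors combine into $P(x_D,y_D)/\bigl(P(x_D)P(y_D)\bigr)$, and the continuous densities combine into $f_{X_CY_C|X_DY_D}/\bigl(f_{X_C|X_D}f_{Y_C|Y_D}\bigr)$; their product is exactly $\frac{dP_{XY}}{dP_XP_Y}(x,y)$.

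It remains to collect the error terms. Writing each factor as (density $+\,\mu$) and expanding the ratio to first order in the $\mu$'s collapses the several $O(\epsilon^\gamma)$ (resp.\ $\sum C_i\epsilon^i$) contributions into a single term $\mu(\epsilon,\gamma,q,\mathbf{C}'_{XY})$ with rescaled constants $\mathbf{C}'_{XY}$, yielding the claim. I expect the main obstacle to be precisely the boundedness-away-from-zero requirement on $f_{X_C|X_D}$ and $f_{Y_C|Y_D}$: it underlies both the finiteness of the Lipschitz constant needed to invoke Lemma \ref{E_pass_lemma} and the validity of the first-order ratio expansion, and must be justified uniformly over cells carrying an edge — equivalently, one must argue that conditioning on $E_{ij}$ does not force $\omega_i$ or $\omega_j'$ to be anomalously small. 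A secondary point to verify is that the additive $O(1/\sqrt{N})$ from the passing-expectation step survives the division by $\E{\omega_i}\E{\omega_j'}\sim\epsilon^{d}$ as a genuine additive remainder of the stated order, rather than being inflated by the small normalizers.
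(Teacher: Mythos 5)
Your proposal follows essentially the same route as the paper's proof: under the no-collision event the bucket counts reduce to the $H_1$-cell counts, Lemma \ref{E_pass_lemma} passes the expectation through the ratio $\nu_{ij}/(\omega_i\omega_j')$ with an $O(1/\sqrt{N})$ remainder controlled by the $O(1/N)$ variances of Lemma \ref{variance_1}, and Lemma \ref{P_r_lemma} evaluates the three cell probabilities so that the $\epsilon$-powers cancel and the ratio identifies $\frac{dP_{XY}}{dP_XP_Y}(x,y)$ up to $\mu(\epsilon,\gamma,q,\mathbf{C}'_{XY})$. The two caveats you flag --- the Lipschitz constant of $\nu/(\omega\,\omega')$ degrading as the normalizers $\E{\omega_i}\E{\omega_j'}\sim\epsilon^{d}$ shrink, and the question of whether the $O(1/\sqrt{N})$ remainder survives the division intact --- are equally present and unaddressed in the paper's own argument, so they are not gaps of your proposal relative to the paper.
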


\begin{proof}
Define $\nu_{ij}=N_{ij}/N$, and recall the definitions $\omega_{ij}=N_{ij}N/N_iN_j$, $\omega_{i}=N_{i}/N$, and $\omega_{j}'=N_{j}/N$. Using Lemma \ref{E_pass_lemma} we have

\begin{align}\label{E_wij_decompose}
\E{\omega_{ij}|E_{ij}^{\leq 1}}=\frac{\E{\nu_{ij}|E_{ij}^{\leq 1}}}{\E{\omega_{i}|E_{ij}^{\leq 1}}\E{\omega_{j}'|E_{ij}^{\leq 1}}}+O\of{\frac{1}{\sqrt{N}}}
\end{align}

Assume that $H(x)=i$. Let $\mathcal{X}$ have $d_C$ and $d_D$ continuous and discrete components, respectively. Also let $\mathcal{Y}$ have $d'_C$ and $d'_D$ continuous and discrete components, respectively. Then we can write

\begin{align}\label{E_wi}
\E{\omega_i|E_{ij}^{\leq 1}}&=\frac{1}{N}\E{N_i|E_{ij}^{\leq 1}}\nonumber\\
&=P(X\in S(x,\epsilon))\nonumber\\
&=P(X_D=x_D)\epsilon^{d_C}\of{f(x_C|x_D)+\mu(\epsilon,\gamma,q,\mathbf{C}_X)},
\end{align}
where in the third line we have used Lemma \ref{P_r_lemma}. Similarly we can write 
\begin{align}\label{E_wj}
\E{\omega'_j|E_{ij}^{\leq 1}}&=P(Y_D=y_D)\epsilon^{d'_C}\of{f(y_C|y_D)+\mu(\epsilon,\gamma,q,\mathbf{C}_X)}, \nonumber\\
\E{\nu_{ij}|E_{ij}^{\leq 1}}&=P(X_D=x_D,Y_D=y_D)\epsilon^{(d_C+d'_C)}\of{f(x_C,y_C|x_D,y_D)+\mu(\epsilon,\gamma,q,\mathbf{C}_{XY})}.
\end{align}

Using \eqref{E_wi} and \eqref{E_wj} in \eqref{E_wij_decompose} results in

\begin{align}\label{E_wij_2}
\E{\omega_{ij}|E_{ij}^{\leq 1}}=\frac{P(X_D=x_D)P(Y_D=y_D)f(x_C|x_D)f(y_C|y_D)}{P(X_D=x_D,Y_D=y_D)f(x_C,y_C|x_D,y_D)}+\mu(\epsilon,\gamma,q,\mathbf{C'}_{XY})+O\of{\frac{1}{\sqrt{N}}},
\end{align}
where $\mathbf{C'}_{XY}$ depends only on $P_{XY}$. 
Now note that using Lemma \ref{P_r_lemma}, $\frac{dP_{XY}}{dP_XP_Y}(x,y)$ can be simplified as 

\begin{align}\label{dP}
\frac{dP_{XY}}{dP_XP_Y}(x,y)=\frac{\frac{dP_{XY,r}}{dr}(x,y)}{\frac{dP_{X,r}P_{Y,r}}{dr}(x,y)}=\frac{P(X_D=x_D)P(Y_D=y_D)f(x_C|x_D)f(y_C|y_D)}{P(X_D=x_D,Y_D=y_D)f(x_C,y_C|x_D,y_D)}+\mu(\epsilon,\gamma,q,\mathbf{C''}_{XY}).
\end{align}
Finally, using \eqref{dP} in \eqref{E_wij_2} gives

\begin{align}
\E{\omega_{ij}|E_{ij}^{\leq 1}}=\frac{dP_{XY}}{dP_XP_Y}(x,y)+\mu(\epsilon,\gamma,q,\mathbf{\widetilde{C}}_{XY})+O\of{\frac{1}{\sqrt{N}}},
\end{align}
where $H(x)=i,H(y)=j$.
\end{proof}
\begin{proof}[\textbf{Proof of Lemma \ref{E_wij_lemma_main}}]
Lemma \ref{E_wij_lemma_main} is a simple consequence of Lemma \ref{E_wij_lemma_app}. We have

\begin{align}\label{E_wij_bias}
\E{\omega_{ij}} &= P\of{E_{ij}^{\leq 1}}\E{\omega_{ij}|E_{ij}^{\leq 1}}+P\of{E_{ij}^{> 1}}\E{\omega_{ij}|E_{ij}^{> 1}}.
\end{align}

Recall the definitions $\widetilde{\mathbf{X}}:=\of{\widetilde{X}_1,\widetilde{X}_2,...,\widetilde{X}_{L_X}}$  and $\widetilde{\mathbf{Y}}:=\of{\widetilde{Y}_1,\widetilde{Y}_2,...,\widetilde{Y}_{L_Y}}$ as the mapped $\mathbf{X}$ and $\mathbf{Y}$ points through $H_1$. Let $\widetilde{\mathbf{Z}}:=\widetilde{\mathbf{X}}\cup \widetilde{\mathbf{Y}}$ and $L_Z:=|\widetilde{\mathbf{Z}}|$. We first find $P\of{E_{ij}^{\leq 1}}$ as follows. For a fixed set $\widetilde{\mathbf{Z}}$ we have

\begin{align}\label{Pij_1}
P\of{E_{ij}^{\leq 1}}&=P\of{E_{v_i}^{=0}\cap E_{u_j}^{=0}}+P\of{E_{v_i}^{=0}\cap E_{u_j}^{=1}}+P\of{E_{v_i}^{=1}\cap E_{u_j}^{=0}}+P\of{E_{v_i}^{=1}\cap E_{u_j}^{=1}}\nonumber\\
&=\frac{\of(F-2)^{L_Z}}{F^{L_Z}}+\frac{L_Y\of(F-2)^{L_Z-1}}{F^{L_Z}}+\frac{L_X\of(F-2)^{L_Z-1}}{F^{L_Z}}+\frac{L_YL_X\of(F-2)^{L_Z-2}}{F^{L_Z}}\nonumber\\
&=1-O\of{\frac{L_Z}{F}}\nonumber\\
&\leq 1-O\of{\frac{L_X+L_Y}{F}}\nonumber\\
&=1-O\of{\frac{1}{\epsilon^dN}}.
\end{align}
Now note that the second term in \eqref{E_wij_bias} is the bias due to collision of $H_2$, and similar to \eqref{Estimator_Bias_Collision} it is upper bounded by $O\of{\frac{1}{\epsilon^dN}}$. Thus, \eqref{Pij_1} and \eqref{E_wij_bias} give rise to 
\begin{align}
\E{\omega_{ij}} &= \frac{dP_{XY}}{dP_XP_Y}(x,y)+\mu(\epsilon,\gamma,q,\mathbf{\widetilde{C}}_{XY})+O\of{\frac{1}{\sqrt{N}}}+ O\of{\frac{1}{\epsilon^dN}}.
\end{align}
which completes the proof.

\end{proof}


In the following lemma we make a relation between the bias of an estimator and the bias of a function of that estimator.
\begin{lemma}\label{f2Z_bias}
Assume that $g(x): \mathcal{X}\to \mathbb{R}$ is infinitely differentiable. If $\widehat{Z}$ is a random variable estimating a constant $Z$ with the bias $\mathbb{B}[\widehat{Z}]$ and the variance $\mathbb{V}[\widehat{Z}]$, then the bias of $g(\widehat{Z})$ can be written as
\begin{align}
\E{g(\widehat{Z})-g(Z)} = \sum_{i=1}^{\infty}\xi_i\of{\mathbb{B}\of[\widehat{Z}]}^i+O\of{\sqrt{\mathbb{V}\of[\widehat{Z}]}},
\end{align}
where $\xi_i$ are real constants.
\end{lemma}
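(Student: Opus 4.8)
The plan is to insert the deterministic quantity $g\of{\E{\widehat Z}}$ and split the error into a fluctuation (variance) part and a deterministic (bias) part. Since $g(Z)$ is a constant, $\E{g(\widehat Z)-g(Z)}=\E{g(\widehat Z)}-g(Z)$, and I would write
\begin{align*}
\E{g\of{\widehat Z}}-g(Z)=\Big(\E{g\of{\widehat Z}}-g\of{\E{\widehat Z}}\Big)+\Big(g\of{\E{\widehat Z}}-g(Z)\Big).
\end{align*}
The first bracket depends only on how $\widehat Z$ fluctuates about its mean, while the second is a purely deterministic function of the bias $\mathbb{B}\of[\widehat Z]=\E{\widehat Z}-Z$.

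For the fluctuation bracket I would apply Lemma \ref{E_pass_lemma}(a) in the single-variable case $k=1$. Because $g$ is infinitely differentiable and $\widehat Z$ takes values in a fixed bounded set (the supports are bounded by \textbf{A1}, and the transformed weights such as $\omega_{ij}$ are bounded), $g$ is Lipschitz on that range with some finite constant $H_g=\sup|g'|$. Lemma \ref{E_pass_lemma}(a) then gives
\begin{align*}
\abs{\E{g\of{\widehat Z}}-g\of{\E{\widehat Z}}}\leq H_g\sqrt{\mathbb{V}\of[\widehat Z]}=O\of{\sqrt{\mathbb{V}\of[\widehat Z]}},
\end{align*}
which is precisely the remainder term appearing in the statement.

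For the deterministic bracket I would Taylor expand $g$ about $Z$. Writing $\E{\widehat Z}=Z+\mathbb{B}\of[\widehat Z]$,
\begin{align*}
g\of{\E{\widehat Z}}-g(Z)=g\of{Z+\mathbb{B}\of[\widehat Z]}-g(Z)=\sum_{i=1}^{\infty}\frac{g^{(i)}(Z)}{i!}\of{\mathbb{B}\of[\widehat Z]}^{i},
\end{align*}
so the claimed coefficients are the $N$-independent constants $\xi_i=g^{(i)}(Z)/i!$. Substituting the two brackets back into the decomposition yields the stated identity.

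The main obstacle is justifying the infinite Taylor series: infinite differentiability alone does not force the series to converge to $g$ (e.g. $e^{-1/x^2}$ at the origin). I would resolve this either by strengthening the hypothesis to real-analyticity of $g$ near $Z$ — which holds for the $g(x)=x\log x$ used for Shannon MI — or, more robustly, by invoking Taylor's theorem with Lagrange remainder truncated at order $m$ and observing that, since the base estimator is consistent so that $\mathbb{B}\of[\widehat Z]\to 0$, the remainder is $O\of{\of{\mathbb{B}\of[\widehat Z]}^{m+1}}$ and the expression is read as an asymptotic expansion in the small parameter $\mathbb{B}\of[\widehat Z]$. A lesser point to check is the Lipschitz bound used above, which relies on $\widehat Z$ almost surely lying in a fixed bounded set so that $\sup|g'|<\infty$ there.
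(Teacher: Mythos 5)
Your proof follows essentially the same route as the paper: the identical decomposition into $g\of{\E{\widehat Z}}-g(Z)$ plus the fluctuation term $\E{g\of{\widehat Z}-g\of{\E{\widehat Z}}}$, a Taylor expansion about $Z$ yielding $\xi_i=g^{(i)}(Z)/i!$, and a Lipschitz plus Cauchy--Schwarz bound (exactly the content of Lemma~\ref{E_pass_lemma} with $k=1$) giving the $O\of{\sqrt{\mathbb{V}\of[\widehat{Z}]}}$ remainder. Your closing caveat about convergence of the infinite Taylor series is well taken --- the paper writes the same series without comment, and the honest reading is the one you propose: either assume real-analyticity of $g$ near $Z$ (which holds for $g(x)=x\log x$ away from the origin) or truncate with a Lagrange remainder and interpret the display as an asymptotic expansion in the vanishing bias.
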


\begin{proof}

\begin{align}
\E{g\of{\widehat{Z}}-g(Z)} & = g\of{\E{\widehat{Z}}}-g(Z) +\E{g\of{\widehat{Z}}-g\of{\E{\widehat{Z}}}} \nonumber\\
& = \sum_{i=1}^{\infty}\of{\E{\widehat{Z}}-Z}^i\frac{g^{(i)}(Z)}{i!} + O\of{\E{\abs{g\of{\widehat{Z}}-g\of{\E{\widehat{Z}}}}}} \nonumber\\
&=\sum_{i=1}^{\infty}\xi_i\of{\mathbb{B}\of[\widehat{Z}]}^i+O\of{\sqrt{\mathbb{V}\of[\widehat{Z}]}}.
\end{align}

In the second line we have used Taylor expansion for the first term, and triangle inequality for the second term. In the third line we have used the definition $\xi_i:=g^{(i)}(Z)/i!$, and the Cauchy\hyp Schwarz inequality for the second term.

\end{proof}

In the following we compute the expectation of the first term in \eqref{bias_proof_1} and prove Theorem \ref{bias_theorem}.

\begin{proof}[\textbf{Proof of Theorem \ref{bias_theorem}}] 

Recall that $N'_i$ and $M'_j$ respectively are defined as the number of the input points $\mathbf{X}$ and $\mathbf{Y}$ mapped to the buckets $\widetilde{X}_i$ and $\widetilde{Y}_j$ using $H_1$. Similarly, $N'_{ij}$ is defined as the number of input pairs $\of{\mathbf{X},\mathbf{Y}}$ mapped to the bucket pair $\of{\widetilde{X}_i,\widetilde{Y}_j}$ using $H_1$. Define the notations $r(i):=H_2^{-1}(i)$ for $i\in \mathcal{F}$ and $s(x):=H_1(x)$ for $x\in \mathcal{X}\cup \mathcal{Y}$. Then from \eqref{E_wij_2} since there is no collision of mapping with $H_2$ into $v_i$ and $u_j$ we have

\begin{align}\label{core_bias}
\E{\frac{N'_{s(x)s(y)}N}{N'_{s(x)}N'_{s(y)}}}=\frac{dP_{XY}}{dP_XP_Y}(x,y)+\mu(\epsilon,\gamma,q,\mathbf{\widetilde{C}}_{XY})+O\of{\frac{1}{\sqrt{N}}},
\end{align}

By using \eqref{Pij_1} and defining $\tilde{h}(x)=\tilde{g}(x)/x$ we can simplify the first term of \eqref{bias_proof_1} as

\begin{align}
\sum_{i,j\in \mathcal{F}} P\of{E_{ij}^{\leq 1}} \E{\mathbbm{1}_{E_{ij}}\omega_{i}\omega'_{j}\widetilde{g}\of{\omega_{ij}}\middle\vert E_{ij}^{\leq 1}} 
&= \of{1-O\of{\frac{1}{\epsilon^dN}}} \sum_{i,j\in \mathcal{F}} \E{\mathbbm{1}_{E_{ij}}\omega_{i}\omega'_{j}\widetilde{g}\of{\omega_{ij}}\middle\vert E_{ij}^{\leq 1}}  \nonumber\\
&= \sum_{i,j\in \mathcal{F}} \E{\mathbbm{1}_{E_{ij}}\frac{N_{i}M_{j}}{N^2}\widetilde{g}\of{\frac{N_{ij}N}{N_iM_j}}\middle\vert E_{ij}^{\leq 1}}+ O\of{\frac{1}{\epsilon^dN}}  \nonumber\\
&= \sum_{i,j\in \mathcal{F}} \E{\mathbbm{1}_{E_{ij}}\frac{N'_{r(i)}M'_{r(j)}}{N^2}\widetilde{g}\of{\frac{N'_{r(i)r(j)}N}{N'_{r(i)}M'_{r(j)}}}}+ O\of{\frac{1}{\epsilon^dN}}  \nonumber\\
&= \sum_{i,j\in \mathcal{F}} \E{\mathbbm{1}_{E_{ij}}\frac{N'_{r(i)r(j)}}{N}\widetilde{h}\of{\frac{N'_{r(i)r(j)}N}{N'_{r(i)}M'_{r(j)}}}}+ O\of{\frac{1}{\epsilon^dN}}  \nonumber\\
&= \frac{1}{N}\sum_{i,j\in \mathcal{F}} \E{N'_{r(i)r(j)}\widetilde{h}\of{\frac{N'_{r(i)r(j)}N}{N'_{r(i)}M'_{r(j)}}}}+ O\of{\frac{1}{\epsilon^dN}}  \label{Bias_wout_coll_5}\\
&= \frac{1}{N}\E{\sum_{i,j\in \mathcal{F}} N'_{r(i)r(j)}\widetilde{h}\of{\frac{N'_{r(i)r(j)}N}{N'_{r(i)}M'_{r(j)}}}}+ O\of{\frac{1}{\epsilon^dN}}  \nonumber\\
&= \frac{1}{N}\E{\sum_{i=1}^N \widetilde{h}\of{\frac{N'_{s(X)s(Y)}N}{N'_{s(X)}M'_{s(Y)}}}}+ O\of{\frac{1}{\epsilon^dN}}  \nonumber\\
&= \frac{1}{N}\sum_{i=1}^N \E{\widetilde{h}\of{\frac{N'_{s(X)s(Y)}N}{N'_{s(X)}M'_{s(Y)}}}}+ O\of{\frac{1}{\epsilon^dN}}  \nonumber\\
&= \mathbb{E}_{(X,Y)\sim P_{XY}}\of[\mathbb{E}\of[\widetilde{h}\of{\frac{N'_{s(X)s(Y)}N}{N'_{s(X)}M'_{s(Y)}}}\Bigg\vert X=x,Y=y]]+ O\of{\frac{1}{\epsilon^dN}}  \nonumber\\
&= \mathbb{E}_{(X,Y)\sim P_{XY}}\of[\frac{dP_{XY}}{dP_XP_Y}]+\mu(\epsilon,\gamma,q,\mathbf{\overline{C}}_{XY})+O\of{\frac{1}{\sqrt{N}}}+O\of{\frac{1}{\epsilon^dN}}  \label{Bias_wout_coll}.\\
\end{align}
\eqref{Bias_wout_coll_5} is due to the fact that $N'_{r(i)r(j)}=0$ if there is no edge between $v_i$ and $u_j$. Also, \eqref{Bias_wout_coll} is due to \eqref{core_bias}.

From \eqref{Bias_wout_coll} and \eqref{bias_proof_1} we obtain

\begin{align}
\E{\widetilde{I}(X,Y)}&=\E{\sum_{e_{ij}\in E_G} \omega_{i}\omega'_{j}\widetilde{g}\of{\omega_{ij}}}= \mathbb{E}_{(X,Y)\sim P_{XY}}\of[\frac{dP_{XY}}{dP_XP_Y}]+\mu(\epsilon,\gamma,q,\mathbf{\overline{C}}_{XY})+O\of{\frac{1}{\sqrt{N}}}+O\of{\frac{1}{\epsilon^dN}}.
\end{align}
Finally using Lemma \ref{f2Z_bias} results in \eqref{bias_terms}.

\end{proof}

\section*{B. Variance Proof}

In this section we first prove bounds on the variances of the edge and vertex weights and then we provide the proof of Theorem \ref{variance}.  

\begin{lemma}\label{variance_1}
Under the assumptions \textbf{A1-A4}, the following variance bounds hold true.

\begin{align}\label{variance_weights}
\mathbb{V}\of[\omega_i]&\leq O\of{\frac{1}{N}},\qquad
\mathbb{V}\of[\omega'_j]\leq O\of{\frac{1}{N}},\qquad
\mathbb{V}\of[\omega_{ij}]\leq O\of{\frac{1}{N}},\qquad
\mathbb{V}\of[\nu_{ij}]\leq O\of{\frac{1}{N}}.
\end{align}

\end{lemma}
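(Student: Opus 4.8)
The plan is to treat all four quantities as functions of the i.i.d.\ input pairs $(X_1,Y_1),\dots,(X_N,Y_N)$ and to apply the Efron--Stein (bounded-difference) inequality, working for a fixed value of the hash parameter $\epsilon$ and conditioning on the realization of the hash function $H$; the resulting bounds are uniform in that realization, so by the law of total variance they also control the unconditional variance. The tool is $\mathbb{V}[f]\le \tfrac12\sum_{k=1}^N \mathbb{E}[(f-f^{(k)})^2]$, where $f^{(k)}$ is obtained by replacing the single pair $(X_k,Y_k)$ by an independent copy.

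The three quantities $\omega_i=N_i/N$, $\omega'_j=M_j/N$ and $\nu_{ij}=N_{ij}/N$ are immediate. Each is $1/N$ times a count ($N_i$, $M_j$, or $N_{ij}$) that increments by at most one per sample, so resampling a single pair changes it by at most $1/N$. Efron--Stein then gives $\mathbb{V}[\,\cdot\,]\le \tfrac12\,N\,(1/N)^2=\tfrac{1}{2N}=O(1/N)$, establishing the first, second and fourth bounds. (Equivalently, conditioned on $H$ these are averages of i.i.d.\ bounded indicators, whose variance is $p(1-p)/N$.)

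The edge weight $\omega_{ij}=N_{ij}N/(N_iM_j)$ is the genuine difficulty, and is the step I expect to be the main obstacle: it is a ratio, and a single resampling perturbs it by terms of order $1/N_i$, $1/M_j$ and $1/N_{ij}$, which are not $O(1/N)$ when a bucket count is small. To control this I would condition on the \emph{good event} $\mathcal{G}$ that $N_i$ and $M_j$ lie within a constant factor of their means $\mathbb{E}[N_i]=N p_i$ and $\mathbb{E}[M_j]=N q_j$; since for fixed $\epsilon$ the bucket probabilities satisfy $p_i,q_j=\Theta(1)$, these means grow linearly in $N$ and a Chernoff bound makes $\Pr(\overline{\mathcal{G}})$ exponentially small. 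On $\mathcal{G}$ the denominators are $\Theta(N)$, so a single-sample change moves $\omega_{ij}$ by $O(1/N)$ and, because only the $\Theta(N)$ points touching bucket $i$ or $j$ contribute a nonzero difference, Efron--Stein yields $\mathbb{V}[\omega_{ij}\mathbbm{1}_{\mathcal{G}}]=O(1/N)$. On $\overline{\mathcal{G}}$ I would use the deterministic bound $\omega_{ij}=N_{ij}N/(N_iM_j)\le N$ (from $N_{ij}\le N_i$ and $M_j\ge 1$), so that $\overline{\mathcal{G}}$ contributes at most $N^2\,\Pr(\overline{\mathcal{G}})=o(1/N)$ to the variance.

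An alternative, more in the spirit of Lemma \ref{E_pass_lemma}, is to write $\omega_{ij}=g(\nu_{ij},\omega_i,\omega'_j)$ with $g(a,b,c)=a/(bc)$ and note that on $\mathcal{G}$ the arguments are $\Theta(1)$, so $g$ is Lipschitz in each argument with constant $O(1)$ (for fixed $\epsilon$); a delta-method/Lipschitz bound then propagates the already-established $O(1/N)$ variances of $\nu_{ij}$, $\omega_i$ and $\omega'_j$ to $\omega_{ij}$. Either way, the crux is certifying that the denominators stay bounded away from zero, which is exactly where the fixed-$\epsilon$ linear growth of the bucket occupancies is used.
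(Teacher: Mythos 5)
Your approach coincides with the paper's on its main tool: the paper also proves Lemma \ref{variance_1} by Efron--Stein, writes $\omega_i=N_i/N$, observes that resampling one pair changes $N_i$ by at most one, and concludes $\mathbb{V}\of[\omega_i]\le \frac{1}{2N^2}O(N)=O(1/N)$ --- exactly your argument for $\omega_i$, $\omega'_j$ and $\nu_{ij}$. The difference is in how the two treatments handle $\omega_{ij}$: the paper simply asserts that it ``can be proved in the same way'' and gives no further detail, whereas you correctly flag that $\omega_{ij}=N_{ij}N/(N_iM_j)$ is \emph{not} a bounded-difference function of the sample in the naive sense (a single resampling can perturb it by order $1/N_i$, $1/M_j$), and you patch this with a good-event/truncation argument, or alternatively a Lipschitz delta-method step in the spirit of Lemma \ref{E_pass_lemma}. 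That extra step is genuinely needed, so your write-up is more complete than the paper's on the one nontrivial case. One caveat worth recording: your control of the denominators rests on $\mathbb{E}\of[N_i]=Np_i$ with $p_i=\Theta(1)$, which holds only for fixed $\epsilon$; in general $p_i=\Theta(\epsilon^{d_X})$, so the occupancies grow like $N\epsilon^{d_X}$ rather than $N$, and the constants in your $O(1/N)$ bound for $\omega_{ij}$ (as well as the Chernoff failure probability for your good event) degrade as $\epsilon\to 0$. Since the lemma is later invoked with $\epsilon=\epsilon(N)\to 0$ (e.g.\ inside Lemma \ref{E_wij_lemma_app}), this $\epsilon$-dependence is a real issue --- but it is one the paper's own one-line justification for $\omega_{ij}$ does not address either, so it is not a gap specific to your proof.
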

\begin{proof}
Here we only provide the variance proof of $\omega_i$. The variance bounds of $\omega'_j$, $\omega_{ij}$ and $\nu_{ij}$ can be proved in the same way. The proof is based on Efron-Stein inequality.
Define $Z_i:=(X_i,Y_i)$. For using the Efron\hyp Stein inequality on $\mathbf{Z}:=(Z_1,...,Z_N)$, we consider another independent copy of $\mathbf{Z}$ as  $\mathbf{Z}':=(Z'_1,...,Z'_N)$ and define $\mathbf{Z}^{(i)}:=(Z_1,...,Z_{i-1},Z'_i,Z_{i+1},...,Z_N)$. Define $\omega_i(\mathbf{Z})$ as the weight of vertex $v_i$ in the dependence graph constructed by the set $\mathbf{Z}$. By applying Efron\hyp Stein inequality \cite{Noshad2017} we have

\begin{align}
\mathbb{V}\of[\omega_i] &\leq \frac{1}{2} \sum_{i=1}^N \mathbb{E}\left[\left(\omega_i\of{\mathbf{Z}}-\omega_i\of{\mathbf{Z}^{(j)}}\right)^2\right]\nonumber\\
&= \frac{1}{2N^2} \sum_{i=1}^N \mathbb{E}\left[\left(N_i\of{\mathbf{Z}}-N_i\of{\mathbf{Z}^{(j)}}\right)^2\right]\nonumber\\
&\leq \frac{1}{2N^2} O\of{N}\nonumber\\
&\leq O\of{\frac{1}{N}}.
\end{align}

In the third line we have used the fact that the absolute value of $N_i\of{\mathbf{Z}}-N_i\of{\mathbf{Z}^{(j)}}$ is at most 1.

\end{proof}

\begin{proof}[\textbf{Proof of Theorem \ref{variance}} ]

We follow similar steps as the proof of Lemma \ref{variance_1}. Define $\widehat{I}_g(\mathbf{Z})$ as the mutual information estimation using the set $\mathbf{Z}$. By applying Efron\hyp Stein inequality we have

\begin{align}
\mathbb{V}\of[\widehat{I}(X,Y)] &\leq \frac{1}{2} \sum_{k=1}^N \mathbb{E}\left[\left(\widehat{I}(\mathbf{Z})-\widehat{I}(\mathbf{Z}^{(k)})\right)^2\right]\nonumber\\
&\leq \frac{N}{2} \mathbb{E}\left[\of{\sum_{e_{ij}\in E_G} \omega_{i}\of{\mathbf{Z}}\omega'_{j}\of{\mathbf{Z}}\widetilde{g}\of{\omega_{ij}\of{\mathbf{Z}}}-\sum_{e_{ij}\in E_G} \omega_{i}\of{\mathbf{Z}^{(k)}}\omega'_{j}\of{\mathbf{Z}^{(k)}}\widetilde{g}\of{\omega_{ij}}\of{\mathbf{Z}^{(k)}}}^2\right]  \nonumber\\
&= \frac{N}{2N^4} \mathbb{E}\left[\of{\sum_{e_{ij}\in E_G} N_{i}\of{\mathbf{Z}}M_{j}\of{\mathbf{Z}}\widetilde{g}\of{\frac{N_{ij}\of{\mathbf{Z}}N}{N_{i}\of{\mathbf{Z}}M_{j}\of{\mathbf{Z}}}}-\sum_{e_{ij}\in E_G} N_{i}\of{\mathbf{Z}^{(k)}}M_{j}\of{\mathbf{Z}^{(k)}}\widetilde{g}\of{\frac{N_{ij}\of{\mathbf{Z}^{(k)}}N}{N_{i}\of{\mathbf{Z}^{(k)}}M_{j}\of{\mathbf{Z}^{(k)}}}}}^2\right]  \label{variance_main_0}\\
&\leq \frac{1}{2N^3} \mathbb{E}\left[\of{\Sigma_{n_1}+\Sigma_{n_2}+\Sigma_{m_1}+\Sigma_{m_1}+D_{n_1m_1}+D_{n_2m_2}}^2\right].\label{variance_main}
\end{align} 

Note that in equation \eqref{variance_main}, when $(X_k,Y_k)$ is resampled, at most two of $N_i$ for $i\in \mathcal{F}$ are changed exactly by one (one decrease and the other increase). The same statement holds true for $M_j$. Let these vertices be $v_{n_1}$, $v_{n_2}$, $v_{m_1}$ and $v_{m_2}$. Also the pair collision counts $N_{ij}$ are fixed except possibly $N_{n_1m_1}$ and $N_{n_2m_2}$ that may change by one. So, in the fourth line $\Sigma_{n_1}$ and $\Sigma_{n_2}$ account for the changes in MI estimation due to the changes in $N_{n_1}$ and $N_{n_2}$, and $\Sigma_{m_1}$ and $\Sigma_{m_2}$ account for the changes in $M_{m_1}$ and $v_{m_2}$, respectively. Finally $D_{n_1m_1}$ and $D_{n_2m_2}$ account for the changes in MI estimation due to the changes in $N_{n_1m_1}$ and $N_{n_2m_2}$. For example, $\Sigma_{n_1}$ is precisely defined as follows:
\begin{align}
\Sigma_{n_1}:=\sum_{j:e_{mj}\in E_G} N_{m}M_{j}\widetilde{g}\of{\frac{N_{mj}N}{N_mN_j}}-(N_{m}+1)M_{j}\widetilde{g}\of{\frac{N_{mj}N}{(N_m+1)M_j}}
\end{align}
where we have used the notations $N_i$ and $N_i^{(k)}$ instead of $N_i(\mathbf{Z})$ and $N_i(\mathbf{Z}^{(k)})$ for simplicity. Now note that by assumption \textbf{A4} we have

\begin{align}\label{g_diff}
\abs{\widetilde{g}\of{\frac{N_{mj}N}{N_mM_j}}-\widetilde{g}\of{\frac{N_{mj}N}{(N_m+1)M_j}}} &\leq G_g \abs{\frac{N_{mj}N}{N_mM_j}-\frac{N_{mj}N}{(N_m+1)M_j}}\nonumber\\
&\leq O\of{\frac{N_{mj}N}{N^2_mM_j}}.
\end{align}

Thus, using \eqref{g_diff}, $\Sigma_{n_1}$ can be upper bounded as follows

\begin{align}
\Sigma_{n_1}&\leq \sum_{j:e_{mj}\in E_G} O\of{\frac{N_{mj}N}{N_m^2}} =  O\of{\frac{N}{N_m}} \leq O(N).
\end{align}

It can similarly be shown that $N_{n_2}$, $\Sigma_{m_1}$, $\Sigma_{m_2}$, $D_{n_1m_1}$ and $D_{n_2m_2}$ are upper bounded by $O(N)$. Thus, \eqref{variance_main} simplifies as follows

\begin{align}\
\mathbb{V}\of[\widehat{I}(X,Y)] &\leq \frac{36O(N^2)}{2N^3}= O(\frac{1}{N}). 
\end{align}

\end{proof}

\section*{C. Optimum MSE Rates of EDGE}
In this short section we prove Theorem \ref{ensemble_theorem}. 

\begin{proof}[\textbf{Proof of Theorem \ref{ensemble_theorem}}]

The proof simply follows by using the ensemble theorem in (\cite{Kevin16}, Theorem 4) with the parameters $\psi_i(t)=t^{i}$ and $\phi_{i,d}(N)=N^{-i/2d}$ for the bias result in Theorem \ref{bias_theorem}. Thus, the following weighted ensemble estimator (EDGE) can achieve the optimum parametric MSE convergence rate of $O(1/N)$ for $q\geq d$.

\begin{align}
\widehat{I}_w:=\sum_{t\in \mathcal{T}}w(t)\widehat{I}_{\epsilon(t)},
\end{align}

\end{proof}

\end{document}